\def\0{\mathbf{0}}
\def\1{\mathbf{1}}
\DeclareMathOperator*{\argmax}{arg\,max}
\def\Fix{\mathrm{Fix}}
\def\R{\mathbb{R}}
\newtheorem{thm}{Theorem}[section]
\newtheorem{lm}[thm]{Lemma}
\newtheorem{cor}[thm]{Corollary}
\newtheorem{ft}[thm]{Fact}
\theoremstyle{remark}
\newtheorem{rk}[thm]{Remark}
\theoremstyle{definition}
\newtheorem{df}[thm]{Definition}
\newtheorem{eg}[thm]{Example}
\title{Order-theoretical fixed point theorems for correspondences and application in game theory}
\author{Lu Yu\thanks{Université Paris 1 Panthéon-Sorbonne, UMR 8074, Centre d'Economie de la Sorbonne, Paris, France, \href{mailto:yulumaths@gmail.com}{yulumaths@gmail.com}} \,\orcidlink{0000-0001-6154-4229}}
\date{\today}
\begin{document}
	\maketitle
	\begin{abstract}For an ascending correspondence $F:X\to 2^X$ with chain-complete values on a complete lattice $X$, we prove that the set of fixed points is a  complete lattice. This strengthens Zhou's fixed point theorem.  For chain-complete posets that are not necessarily lattices, we generalize the Abian-Brown and the Markowsky fixed point theorems from single-valued maps to multivalued correspondences. {We provide an application in game theory.}\end{abstract}

\section{Introduction}
Fixed point theorems play a fundamental role in the study of existence of Nash equilibria of games. For example, Nash \cite{nash1950equilibrium} uses Brouwer's fixed point theorem to prove the existence of Nash equilibria for normal form games. Shapley \cite[p.1097]{shapley1953stochastic} applies Banach's fixed point theorem to show that  value exists for zero-sum stochastic games. Another pivotal fixed point theorem (Fact \ref{ft:Tarski}) of order nature due to Tarski \cite{tarski1955lattice} is utilized by  Topkis \cite{topkis1979equilibrium} to investigate supermodular games.   We shall review several results subsequent to Tarski's result and give some generalizations.

\subsection*{Notation}For a self-correspondence $F:S\to 2^S$ on a set $S$, let  $\Fix(F):=\{x\in S|x\in F(x)\}$ denote the set of fixed points of $F$.  A set with a partial order is called a poset. For a poset $X$, let $\0$ (resp. $\1$) denote the minimum (resp. maximum) element whenever it exists. For   subsets $S$ and $T$ of $X$, we write $S\le T$ (resp. $S<T$) to mean that  $s\le t$ (resp. $s<t$)  for all $s\in S$ and $t\in T$. For every self-correspondence $F:X\to 2^X$ and every $h\in X$, define a ``truncated'' correspondence $F^h:[h,\infty)\to 2^{[h,\infty)}$ by $x \mapsto F(x)\cap [h,\infty)$. Then  $\Fix(F^h)=\Fix(F)\cap [h,\infty)$. If $\0=\min X$ exits, then $F^{\0}=F$. Let $A_F:=\{x\in X|F(x)\cap [x,+\infty)\neq\emptyset\}$ and $B_F:=\{x\in X|F(x)\cap(-\infty,x]\neq\emptyset\}$.

\subsection*{Fixed point theorems on lattices}
We recall some known fixed point theorems for maps and correspondences defined on lattices, and state our first main result.

A poset is a \emph{complete lattice}, if every nonempty subset has a least upper bound as well as a largest lower bound. Let $X$ be a nonempty complete lattice. 
\begin{ft}[Knaster-Tarski, {\cite[Thm.~1]{tarski1955lattice}}]\label{ft:Tarski}
Let $f:X\to X$ be an increasing map. Then $\Fix(f)$ is a nonempty complete lattice. Its largest (resp. least) element is $\sup_X(A_f)$ (resp, $\inf_X(B_f)$).
\end{ft}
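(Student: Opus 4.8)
The plan is to follow the classical three-part structure: first locate the greatest and least fixed points explicitly, and then bootstrap these to establish that $\Fix(f)$ is a complete lattice. For single-valued $f$ the sets $A_f$ and $B_f$ read simply as $A_f=\{x\in X: x\le f(x)\}$ and $B_f=\{x\in X: f(x)\le x\}$.

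First I would identify the greatest fixed point. Since $\0\le f(\0)$, the set $A_f$ is nonempty, and as $X$ is a complete lattice $a:=\sup_X(A_f)$ exists. For each $x\in A_f$ we have $x\le a$, hence $x\le f(x)\le f(a)$ by monotonicity; thus $f(a)$ is an upper bound of $A_f$, giving $a\le f(a)$, i.e. $a\in A_f$. Applying $f$ once more yields $f(a)\le f(f(a))$, so $f(a)\in A_f$ and therefore $f(a)\le a$. Hence $f(a)=a$. Because every fixed point lies in $A_f$, the element $a$ is the greatest fixed point, which simultaneously proves $\Fix(f)\neq\emptyset$ and the formula $\max\Fix(f)=\sup_X(A_f)$. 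The dual argument applied to $B_f$ (equivalently, running the same reasoning in the opposite lattice $X^{\op}$) gives $\min\Fix(f)=\inf_X(B_f)$.

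The more delicate part is the complete-lattice structure, and the main obstacle is that the supremum of a family of fixed points computed in $X$ need not itself be a fixed point. To produce $\sup_{\Fix(f)}S$ for a nonempty $S\subseteq\Fix(f)$, I would set $u:=\sup_X S$ and restrict $f$ to the interval $[u,\infty)$. This interval is again a complete lattice, and $f$ maps it into itself: for $x\ge u$ and any $s\in S$ we have $s=f(s)\le f(x)$, whence $f(x)\ge\sup_X S=u$. The first part, applied to the increasing self-map $f|_{[u,\infty)}$, then furnishes a least fixed point of this restriction. Since the fixed points of $f$ lying in $[u,\infty)$ are exactly those elements of $\Fix(f)$ that dominate every element of $S$, this least fixed point is precisely the least upper bound of $S$ inside $\Fix(f)$. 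The dual construction on $(-\infty,\inf_X S]$ produces $\inf_{\Fix(f)}S$. Hence every nonempty subset of $\Fix(f)$ admits both a supremum and an infimum within $\Fix(f)$, so $\Fix(f)$ is a complete lattice, completing the proof.
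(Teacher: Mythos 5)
Your proof is correct: it is Tarski's classical argument, and the paper itself states Fact \ref{ft:Tarski} without proof, citing \cite{tarski1955lattice}. Your two-step scheme --- first showing $\sup_X(A_f)$ is the greatest fixed point, then obtaining $\sup_{\Fix(f)}S$ as the least fixed point of the restriction of $f$ to the complete lattice $[\sup_X S,\1]$ --- is exactly the pattern the paper generalizes to correspondences in Theorem \ref{thm:myCalciano}, Lemma \ref{lm:7} and the truncation arguments of Sections \ref{sec:1pf} and \ref{sec:2pf}, so no further comparison is needed.
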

Topkis applies Fact \ref{ft:Tarski} to get the existence of a largest and a least equilibria of  supermodular games in non-product spaces in \cite[Theorem 3.1]{topkis1979equilibrium}.
Finer study of Nash equilibria requires a multivalued generalization of Fact \ref{ft:Tarski}, namely Fact  \ref{ft:VeinottZhou}. To state it, we need  a notion of increasingness for correspondences.  Different candidate definitions exist in the literature.  All the multivalued monotone properties introduced below reduce to the usual increasing condition for single-valued  maps.  Definition \ref{df:Zasc} is introduced by Veinott  (see, e.g., \cite[p.308]{topkis1978minimizing} and \cite[p.296]{zhou1994set}).
\begin{df}\label{df:Zasc}
Let $P$ be a poset. Let $Y$ be a lattice. Let $F:P\to 2^Y$ be a correspondence. If for any $x\le x'$ in $P$, every $y\in F(x)$ and every $y'\in F(x')$, we have $y\vee y'\in F(x')$ and $y\wedge y'\in F(x)$,  then $F$ is called  \emph{ascending}.
\end{df}If $F:P\to 2^Y$ is ascending, then $F(x)$ is a sublattice of $Y$ for every $x\in P$.

Fact \ref{ft:VeinottZhou} appears  in Zhou's work  \cite[Theorem 1]{zhou1994set}. It is a special case of  Veinott's unpublished work \cite[Ch. 4, Theorem 14]{veinott1992lattice}.   
\begin{ft}[Veinott-Zhou]\label{ft:VeinottZhou}
Let $F:X\to 2^X$ be an ascending correspondence. If  $F(x)$ is a nonempty \emph{subcomplete} sublattice of $S$ for every $x\in X$,   then $\Fix(F)$ is a nonempty complete lattice.
\end{ft}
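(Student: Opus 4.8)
The plan is to reduce Fact \ref{ft:VeinottZhou} to the Knaster--Tarski theorem (Fact \ref{ft:Tarski}) by extracting a single-valued increasing selection from $F$, and then to upgrade mere nonemptiness to a full complete-lattice structure on $\Fix(F)$ by means of the truncated correspondences $F^h$.

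First I would treat nonemptiness and the extremal fixed points. Since each value $F(x)$ is a nonempty subcomplete sublattice of $X$, the element $g(x):=\inf_X F(x)$ belongs to $F(x)$, and dually $f(x):=\sup_X F(x)\in F(x)$. I claim $g$ and $f$ are increasing. For $x\le x'$, writing $y=f(x)\in F(x)$ and $y'=f(x')\in F(x')$, the ascending property (Definition \ref{df:Zasc}) gives $y\vee y'\in F(x')$; as $y'=\sup_X F(x')$ this forces $y\vee y'=y'$, i.e. $f(x)\le f(x')$, and the case of $g$ is dual. Applying Fact \ref{ft:Tarski} to $g$, the set $\Fix(g)$ is a nonempty complete lattice. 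Because $g(x)=x$ means $x=\inf_X F(x)\in F(x)$, we have $\Fix(g)\subseteq\Fix(F)$, so $\Fix(F)\neq\emptyset$. One checks that $\inf_X F(x)\le x$ is equivalent to $F(x)\cap(-\infty,x]\neq\emptyset$, so $B_g=B_F$, whence the least element of $\Fix(g)$ is $\inf_X B_F$; since every fixed point of $F$ lies in $B_F$, this element is in fact the least fixed point of $F$. Dually $\sup_X A_F$ is the greatest fixed point of $F$.

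The heart of the argument is to promote this to the lattice structure via truncation. Given a nonempty $W\subseteq\Fix(F)$, set $w:=\sup_X W$ and consider $F^w$ on the interval $[w,\infty)=[w,\1]$, which is itself a complete lattice. I would verify that $F^w$ inherits the hypotheses: it is ascending, since $y,y'\ge w$ yields $y\wedge y'\ge w$; and $F^w(x)=F(x)\cap[w,\infty)$ is a subcomplete sublattice, since sups and infs computed in $X$ of subsets bounded below by $w$ remain $\ge w$ and stay in $F(x)$. The key step is the nonemptiness of these values: for $x\ge w$ and any $v\in W$ one has $v\le w\le x$, so the ascending property applied to $v\in F(v)$ and $\sup_X F(x)\in F(x)$ gives $v\vee\sup_X F(x)\in F(x)$, hence $v\le\sup_X F(x)$; taking the supremum over $v\in W$ yields $\sup_X F(x)\ge w$, so $\sup_X F(x)\in F^w(x)$ and $F^w(x)\neq\emptyset$.

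With the hypotheses verified, the first part applied to $F^w$ furnishes a least fixed point $m:=\inf_{[w,\infty)}B_{F^w}$ of $F^w$. Since $\Fix(F^w)=\Fix(F)\cap[w,\infty)$, the element $m$ is a fixed point of $F$ and an upper bound of $W$; and any fixed point $u$ of $F$ bounding $W$ above satisfies $u\ge w$, hence $u\in\Fix(F^w)$ and $m\le u$, so $m=\sup_{\Fix(F)}W$. A dual argument, applied to $x\mapsto F(x)\cap(-\infty,w']$ on the interval $(-\infty,w']$ with $w':=\inf_X W$, produces $\inf_{\Fix(F)}W$, and therefore $\Fix(F)$ is a complete lattice. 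I expect the main obstacle to be precisely the nonemptiness of the truncated values $F^w(x)$: this is the single place where the fixed-point property of the elements of $W$ must be combined with the ascending condition, and it is what legitimizes the truncation; the remaining points (monotonicity of the selections and subcompleteness of $F^w(x)$) are routine.
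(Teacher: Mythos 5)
Your proof is correct. Note that the paper states Fact \ref{ft:VeinottZhou} as a known result and does not prove it directly; it obtains it as a special case of Theorem \ref{thm:myZhou2}, so the relevant comparison is with the two proofs of that generalization. The skeleton is the same in both: first produce an extremal fixed point, then, for a nonempty $W\subseteq\Fix(F)$, truncate at $w=\sup_XW$ and identify the least fixed point of $F^w$ on $[w,\1]$ with $\sup_{\Fix(F)}W$. Where you genuinely diverge is in how each step is executed. You exploit the full subcomplete-sublattice hypothesis to get the single-valued increasing selections $x\mapsto\sup_XF(x)$ and $x\mapsto\inf_XF(x)$, reduce the extremal step to Knaster--Tarski (Fact \ref{ft:Tarski}) via $B_g=B_F$, and obtain nonemptiness of $F^w(x)$ essentially for free because $\sup_XF(x)$ itself lies in $F^w(x)$. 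Under the paper's weaker hypotheses (values only chain-bounded above and chain-subcomplete downwards, $F$ only V-/C-ascending) these selections need not exist in $F(x)$, so Lemma \ref{lm:Zhou2} substitutes the set of maximal elements of $F(x)$ (Zorn's lemma plus Theorem \ref{thm:myCalciano}), and the nonemptiness of the truncated values requires either the transfinite-induction Lemma \ref{lm:VeinottinZhou} or the maximal-element argument in Theorem \ref{thm:my13}. Your route buys a short, self-contained deduction of Veinott--Zhou from Tarski; its cost is precisely the generality that Theorems \ref{thm:chaincpltvalue} and \ref{thm:myZhou2} are after. A small economy: once you have a least fixed point and suprema of all nonempty subsets of $\Fix(F)$, the dual truncation for infima is redundant, since $\inf_{\Fix(F)}W$ is the supremum in $\Fix(F)$ of the (nonempty) set of lower bounds of $W$ in $\Fix(F)$; this is the role of \cite[Lemma 2.6]{yu2023generalization1} in the paper's proofs.
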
For a parameterized game with complementarities, Veinott \cite[Ch. 10, Theorem 2]{veinott1992lattice} applies Fact \ref{ft:VeinottZhou} to prove that the set of Nash equilibria is  a nonempty complete lattice.   Based on Fact \ref{ft:VeinottZhou}, Zhou \cite[Theorem 2]{zhou1994set}  also proves that the set of Nash equilibria is  a nonempty complete lattice for  a supermodular game in product space.

Using Tarski's original argument in \cite{tarski1955lattice}, Calciano \cite{calciano2010theory} gives a generalization of Fact \ref{ft:VeinottZhou}. It involves  increasingness conditions  weaker than  ascendingness.
\begin{df}
Let $S, Y$ be two posets, and let $F:S\to 2^Y$ be a correspondence. \begin{enumerate}
	\item (\cite[Definitions 9]{calciano2010theory}) If for any $x\le x'$ in $S$ and every $y\in F(x)$ (resp. every $y'\in F(x')$), there is $y'\in F(x')$ (resp. $y\in F(x)$) with $y'\ge y$, then $F$ is  \emph{upper} (resp. \emph{lower}) \emph{increasing}.
	\item (\cite[Definition 10]{calciano2010theory}) Assume that $Y$ is a lattice. If for any $x\le x'$ in $S$, every $y\in F(x)$ and every $y'\in F(x')$, there is $q\in F(x')$ (resp. $p\in F(x)$) such that $q\in [y,y\vee y']$ (resp. $p\in [y\wedge y',y']$), then $F$ is  \emph{strongly upper} (resp. \emph{lower}) \emph{increasing}.
\end{enumerate}
\end{df}
Every ascending correspondence is strongly upper increasing and strongly lower increasing. Every strongly upper (resp. lower) increasing correspondence of nonempty values is upper (resp. lower) increasing.

\begin{thm}[Calciano, {\cite[Theorem 13]{calciano2010theory}}]\label{thm:CalZhou}
 Let $F:X\to 2^X$ be a correspondence of nonempty values.   Assume that \begin{enumerate}
	\item\label{it:Fhasmax} for every $x\in X$, the value $F(x)$ has a greatest element;
	\item\label{it:Fuppinc} $F$ is upper increasing;
	\item\label{it:Fstr} $F$ is strongly lower increasing;
	\item\label{it:Fhmin} for every $h\in A_F$ and every $x\in [h,\1]$, the value $F^h(x)$ has a least element whenever nonempty.
\end{enumerate} Then  $\Fix(F)$ is a nonempty complete lattice.
\end{thm}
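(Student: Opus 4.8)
The plan is to reduce the multivalued problem to two applications of the single-valued Knaster–Tarski theorem (Fact \ref{ft:Tarski}), one using the ``top selection'' of $F$ and one using a truncated ``bottom selection''. By hypothesis \ref{it:Fhasmax} the map $g:=\max\circ F$, $g(x):=\max F(x)$, is a well-defined self-map of $X$; upper increasingness (\ref{it:Fuppinc}) makes it increasing, since for $x\le x'$ the element $g(x)\in F(x)$ is dominated by some $y'\in F(x')$, whence $g(x)\le y'\le g(x')$. This single-valued increasing map will be the tool that guarantees fixed points exist and that the relevant truncated values are nonempty.

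The core step is to show that for every $h\in A_F$ the set $\Fix(F^h)=\Fix(F)\cap[h,\1]$ has a least element. First I would check that the truncated values are nonempty on all of $[h,\1]$: since $h\in A_F$ gives $g(h)=\max F(h)\ge h$, monotonicity of $g$ yields $g(x)\ge g(h)\ge h$ for every $x\in[h,\1]$, so $g(x)\in F(x)\cap[h,\1]=F^h(x)$. Hypothesis \ref{it:Fhmin} then makes $\ell_h(x):=\min F^h(x)$ a well-defined self-map of the complete lattice $[h,\1]$. Next I would prove $\ell_h$ is increasing using strong lower increasingness (\ref{it:Fstr}): for $x\le x'$ in $[h,\1]$, applied to $y=\ell_h(x)$ and $y'=\ell_h(x')$ it produces $p\in F(x)$ with $y\wedge y'\le p\le y'$; because $y,y'\ge h$ we get $p\in[h,\1]$, so $p\in F^h(x)$ and hence $\ell_h(x)\le p\le\ell_h(x')$. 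Fact \ref{ft:Tarski} applied to $\ell_h$ then gives a least fixed point $\underline x=\inf_{[h,\1]}B_{\ell_h}$, and I would identify it as $\min\Fix(F^h)$: every $z\in\Fix(F^h)$ satisfies $\ell_h(z)=\min F^h(z)\le z$, so lies in $B_{\ell_h}$ and dominates $\underline x$, while $\underline x$ itself is a fixed point of $F$ in $[h,\1]$.

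With this sub-result in hand the theorem follows quickly. Taking $h=\0$ (which lies in $A_F$ because $F(\0)\cap[\0,\1]=F(\0)\ne\emptyset$) yields a least element of $\Fix(F)=\Fix(F^{\0})$, so $\Fix(F)$ is nonempty with a bottom. For an arbitrary nonempty $S\subseteq\Fix(F)$, set $h:=\sup_X S$; I would show $h\in A_F$ by using upper increasingness together with \ref{it:Fhasmax}: each $s\in S$ satisfies $s\in F(s)$ and $s\le h$, so there is $y_s\in F(h)$ with $y_s\ge s$, giving $g(h)\ge y_s\ge s$ for all $s$ and therefore $g(h)\ge\sup S=h$. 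The least element of $\Fix(F^h)$ is then exactly the least fixed point of $F$ above every element of $S$, i.e. the least upper bound of $S$ inside $\Fix(F)$. A poset with a least element in which every nonempty subset admits a supremum is a complete lattice (infima are recovered as suprema of the nonempty sets of lower bounds), so $\Fix(F)$ is a nonempty complete lattice.

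The main obstacle, and the place where the two halves of the hypotheses must interlock, is the construction of the increasing selection $\ell_h$ on $[h,\1]$: I must simultaneously use the ``top'' conditions \ref{it:Fhasmax}--\ref{it:Fuppinc} to guarantee $F^h$ has nonempty values there, and the ``bottom'' conditions \ref{it:Fstr}--\ref{it:Fhmin} to make the pointwise minimum well-defined and monotone. The delicate point in the monotonicity argument is verifying that the intermediate element $p$ supplied by strong lower increasingness actually falls in the truncated value $F^h(x)$ rather than merely in $F(x)$; this is what forces the use of the genuinely stronger hypothesis \ref{it:Fstr} rather than plain lower increasingness.
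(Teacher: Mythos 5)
Your argument is correct, and it follows a genuinely different route from the paper's. The paper deduces Theorem \ref{thm:CalZhou} from its general Theorem \ref{thm:my13}: it verifies chain-boundedness of the values, a monotonicity condition on \emph{maximal} elements of $F(x)$ (via upper increasingness), membership $A'_F\subset A_F$ (Lemma \ref{lm:A'inA}), and a monotonicity condition on \emph{minimal} elements of $F^h(x)$, all of which feed into the extremal fixed point machinery of Theorem \ref{thm:myCalciano}. You instead extract two increasing single-valued selections --- the top selection $g=\max F$ on $X$ and the truncated bottom selection $\ell_h=\min F^h$ on $[h,\1]$ --- and reduce everything to the classical Knaster--Tarski theorem (Fact \ref{ft:Tarski}); in effect you show that Calciano's hypotheses imply Sabarwal's (Theorem \ref{thm:Sabarwal}, conditions \ref{it:isotonesup} and \ref{it:isoinfupint}), and your use of $g$ to witness $F^h(x)\neq\emptyset$ on $[h,\1]$ and of $g(h)\ge\sup S$ to show $\sup_X S\in A_F$ are clean replacements for the paper's appeal to maximal elements and Lemma \ref{lm:A'inA}. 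What your approach buys is brevity and self-containedness for this particular statement; what the paper's approach buys is generality: Theorem \ref{thm:my13} only needs maximal/minimal elements and chain-boundedness rather than greatest/least elements, so the same proof skeleton also covers Theorem \ref{thm:myZhou2} and Theorem \ref{thm:chaincpltvalue} in situations (such as Example \ref{eg:01abZhou}) where no increasing $\max$-selection exists and your reduction would not apply.
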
 
As an application of Theorem \ref{thm:CalZhou}, Calciano \cite[Theorems 24 and 26]{calciano2010theory} proves that the set of Nash equilibria of a uniformly g-modular game (in the sense of \cite[Definition 21]{calciano2010theory}) is a nonempty complete lattice. 

Sabarwal's result \cite[Theorem 4]{sabarwal2023general} has a flavor similar to that of Theorem \ref{thm:CalZhou}. We give new proofs of Theorems \ref{thm:CalZhou} and \ref{thm:Sabarwal} in Section \ref{sec:2pf}.\begin{thm}[Sabarwal]\label{thm:Sabarwal}Let $F:X\to 2^X$ be a correspondence of nonempty values. Assume that
	\begin{enumerate}[label=\emph{\Roman*})]\item\label{it:isotonesup} for every $x\in X$, the element $\max F(x)$ exists and $\max F:X\to X$ is increasing.
		\item\label{it:isoinfupint} for every $h\in A_F$ and  every $x\ge h$ in $X$, the value $F^h(x)$ has a least element and $\min F^h:[h,\1]\to [h,\1]$ is increasing. \end{enumerate} Then $\Fix(F)$ is a nonempty complete lattice. 
\end{thm}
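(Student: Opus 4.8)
The plan is to reduce the multivalued problem to two single-valued ones via the monotone selections furnished by the hypotheses, apply the Knaster--Tarski theorem (Fact \ref{ft:Tarski}) to each, and then assemble the pieces using the elementary order-theoretic fact that a poset possessing a least element in which every nonempty subset has a supremum is automatically a complete lattice (the infimum of a nonempty $T$ is recovered as the supremum of its set of lower bounds, which is nonempty precisely because a least element exists). Concretely, I will use hypothesis \ref{it:isoinfupint} to produce a least element of $\Fix(F)$, and the combination of \ref{it:isotonesup} and \ref{it:isoinfupint} to produce the supremum in $\Fix(F)$ of an arbitrary nonempty $S\subseteq\Fix(F)$.

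For the least element, take $h=\0=\min X$, which exists since $X$ is a complete lattice. As $[\0,+\infty)=X$ and $F(\0)\neq\emptyset$, we have $\0\in A_F$, so $F^{\0}=F$ and hypothesis \ref{it:isoinfupint} makes $\min F\colon X\to X$ an increasing single-valued map. By Fact \ref{ft:Tarski}, $\Fix(\min F)$ is a nonempty complete lattice with least element $\inf_X B_{\min F}$. I would then check that $B_{\min F}=B_F$ and that this least element is in fact the least element of $\Fix(F)$: any fixed point $x$ of $\min F$ lies in $F(x)$, hence in $\Fix(F)$, while conversely any $z\in\Fix(F)$ satisfies $\min F(z)\le z$, so $z\in B_F$ and $z\ge\inf_X B_F$.

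For suprema, fix a nonempty $S\subseteq\Fix(F)$ and put $h=\sup_X S$. The crucial step is to show $h\in A_F$, and this is exactly where hypothesis \ref{it:isotonesup} enters: for each $s\in S$ we have $s\le\max F(s)$ because $s\in F(s)$, and since $s\le h$ with $\max F$ increasing, $\max F(s)\le\max F(h)$; taking the supremum over $s$ gives $h=\sup_X S\le\max F(h)$, whence $\max F(h)\in F(h)\cap[h,+\infty)$ and $h\in A_F$. Now hypothesis \ref{it:isoinfupint} applies at this $h$: the interval $[h,\1]$ is a complete lattice, the values $F^h(x)$ are nonempty for every $x\ge h$ (indeed $\max F(x)\ge\max F(h)\ge h$ places $\max F(x)$ in $F^h(x)$), and $\min F^h\colon[h,\1]\to[h,\1]$ is increasing. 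Applying Fact \ref{ft:Tarski} on $[h,\1]$ yields a least fixed point $m$ of $\min F^h$. I would then verify that $m$ is the least element of $\Fix(F^h)=\Fix(F)\cap[h,+\infty)$ — every such fixed point $z$ satisfies $z\in F^h(z)$, so $\min F^h(z)\le z$ and $z\ge m$ — and finally that $m=\sup_{\Fix(F)}S$: it is an upper bound of $S$ lying in $\Fix(F)$ because $m\ge h\ge s$, while any upper bound $w\in\Fix(F)$ of $S$ satisfies $w\ge h$, hence $w\in\Fix(F^h)$ and $w\ge m$.

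Assembling these, $\Fix(F)$ is nonempty, has a least element, and admits a supremum for every nonempty subset, so the order-theoretic fact quoted above upgrades it to a complete lattice. I expect the main obstacle to be the asymmetry between the two hypotheses — \ref{it:isotonesup} controls only the global selection $\max F$, whereas \ref{it:isoinfupint} controls every lower truncation $\min F^h$ — which forces the argument to extract suprema (rather than infima) from the truncations. The linchpin that makes the two cooperate is the verification that $\sup_X S\in A_F$, since without the monotonicity of $\max F$ one could not guarantee that the truncation $F^{\sup_X S}$ falls within the scope of hypothesis \ref{it:isoinfupint}.
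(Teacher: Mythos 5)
Your proof is correct, and it takes a genuinely different route from the paper's. The paper proves Theorem \ref{thm:Sabarwal} in two lines by verifying the four conditions of its general Theorem \ref{thm:my13}, whose own proof runs through the multivalued extremal fixed point theorem (Theorem \ref{thm:myCalciano}) applied to the correspondence of minimal elements of $F^h(x)$, with Zorn's lemma and Lemmas \ref{lm:7} and \ref{lm:A'inA} doing the work. You keep the same outer skeleton --- least element from the truncation at $h=\0$, supremum of $S\subseteq\Fix(F)$ from the truncation at $h=\sup_X S$, then the ``least element plus all suprema'' upgrade (which is \cite[Lemma 2.6]{yu2023generalization1} in the paper) --- but you replace the multivalued engine by a direct application of the classical Knaster--Tarski theorem to the increasing single-valued selections $\min F$ and $\min F^h$ that Sabarwal's hypotheses hand you for free, and your computation showing $\sup_X S\le \max F(\sup_X S)$ is exactly the content of Lemma \ref{lm:A'inA} specialized to this setting. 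What your approach buys is self-containedness and economy for this particular theorem: no Zorn's lemma, no auxiliary correspondence of minimal elements, only Fact \ref{ft:Tarski}. What the paper's approach buys is generality: Theorem \ref{thm:my13} also covers Calciano's theorem and the paper's Theorems \ref{thm:chaincpltvalue} and \ref{thm:myZhou2}, where least and greatest elements of the values need not exist and one must work with maximal and minimal elements instead, so the single-valued reduction is unavailable there. All the individual steps you flag for verification (that $\0\in A_F$, that $B_{\min F}=B_F$, that the least fixed point of the selection is the least element of $\Fix(F^h)$, and the nonemptiness of $F^h(x)$ via $\max F(x)\ge\max F(h)\ge h$) do go through as you describe.
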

Fact \ref{ft:weakZhou} weakens ascendingness in Fact \ref{ft:VeinottZhou} to V-ascendingness (Definition \ref{df:Vascending}), and relax the subcompleteness in two directions.
\begin{ft}[{\cite[Theorems 1.2 and 1.3]{yu2023generalization1}}]\label{ft:weakZhou}
	Let $F:X\to 2^X$ be a V-ascending correspondence with nonempty values. Suppose that 
	\begin{itemize}
		\item either $F(x)$ is chain-subcomplete in $X$ for every $x\in X$, 
		\item or $F(x)$ is a complete lattice  for every $x\in X$.
	\end{itemize} Then $\Fix(F)$ is a nonempty complete lattice.
\end{ft}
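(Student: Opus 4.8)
The plan is to follow the Knaster--Tarski paradigm underlying Facts \ref{ft:Tarski} and \ref{ft:VeinottZhou}, adapted to the two weaker value hypotheses. Since a poset in which every subset admits a supremum is automatically a complete lattice (the infimum of a family being the supremum of its lower bounds, and $\sup\emptyset$ the least element), it suffices to produce, for every $S\subseteq\Fix(F)$, a least upper bound of $S$ inside $\Fix(F)$. The key reduction is the truncation device recalled in the Notation: for $h:=\sup_X S$ one has $\Fix(F^h)=\Fix(F)\cap[h,\1]$, and every fixed point dominating $S$ lies here, so $\sup_{\Fix(F)}S$ is exactly the least element of $\Fix(F^h)$. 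Thus the theorem reduces to two assertions: that $\Fix(F)$ is nonempty with a greatest element (which guarantees $\Fix(F^h)\neq\emptyset$ whenever $h$ dominates a family of fixed points), and that each truncation $F^h$ on the complete lattice $[h,\1]$ again satisfies the hypotheses and possesses a \emph{least} fixed point.

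First I would treat existence and extremal fixed points. When every $F(x)$ is a complete lattice, the selections $f(x):=\sup F(x)$ and $g(x):=\inf F(x)$ are genuine maps $X\to X$ landing in $F(x)$ (the internal top and bottom). V-ascendingness, being a relaxation of Definition \ref{df:Zasc}, still retains enough join/meet closure to force $f$ and $g$ to be increasing: for $x\le x'$ the join (resp. meet) of the two selected values must again lie in $F(x')$ (resp. $F(x)$), which pins $f(x)\le f(x')$ and $g(x)\le g(x')$. Fact \ref{ft:Tarski} then applies to $f$ and to $g$; since $\sup F(x)=x$ (resp. $\inf F(x)=x$) forces $x\in F(x)$, fixed points of $f$ and $g$ are fixed points of $F$, and a short comparison yields $\max\Fix(f)=\max\Fix(F)$ and $\min\Fix(g)=\min\Fix(F)$. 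When instead every $F(x)$ is only chain-subcomplete, the internal suprema need not lie in $F(x)$, so the selection step is replaced by a transfinite iteration in the spirit of the Abian--Brown and Markowsky theorems: starting from $\0$ one climbs through values of $F$, taking suprema in $X$ at limit stages and invoking chain-subcompleteness to keep these limits inside the relevant value, while V-ascendingness ensures that one can always continue upward.

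Next I would establish the lattice structure through the truncation $F^h$. One checks that V-ascendingness passes to $F^h$ (joins and meets of elements $\ge h$ remain $\ge h$, so the defining closure survives intersection with $[h,\1]$) and, in the chain-subcomplete case, that $F(x)\cap[h,\1]$ is again chain-subcomplete (the $X$-supremum and $X$-infimum of a chain of elements $\ge h$ are $\ge h$ and remain in $F(x)$). Applying the least-fixed-point half of the argument above to $F^h$ then gives $\min\Fix(F^h)=\sup_{\Fix(F)}S$, and dually $\inf_{\Fix(F)}S$ via $(-\infty,k]$ with $k:=\inf_X S$, completing the proof that $\Fix(F)$ is a complete lattice.

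The main obstacle I anticipate is twofold. In the chain-subcomplete case it is controlling the transfinite iteration: one must ensure that at every limit ordinal the supremum taken in $X$ still belongs to the correspondence's value and still admits a further upward step, which is precisely what chain-subcompleteness together with V-ascendingness is meant to deliver but requires care to set up cleanly and to identify the limit as the \emph{greatest} fixed point. In the complete-lattice case the delicate point is inheritance under truncation: unlike Zhou's \emph{subcomplete} sublattices, an abstract complete lattice $F(x)$ need not make $F(x)\cap[h,\1]$ a complete lattice --- its internal infima can fall below $h$ --- so the inf-selection of $F^h$ may fail to exist and one cannot simply quote Fact \ref{ft:Tarski} for $F^h$. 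Overcoming this, by showing that the least fixed point above $h$ is nonetheless reachable (for instance by an order-preserving transfinite descent through the fixed points dominating $S$, or by reducing this case to a chain argument), is the crux of the theorem.
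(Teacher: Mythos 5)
Your skeleton --- establish extremal fixed points, then truncate at $h=\sup_X S$ and find the least fixed point of $F^h$ --- matches the paper's architecture (it obtains this Fact as a special case of Theorem \ref{thm:chaincpltvalue}, proved via Theorems \ref{thm:my13} and \ref{thm:myCalciano}), and your selection argument for the greatest and least fixed points in the complete-lattice-values case is sound. But the proposal does not close, for two reasons. The first you do not flag at all: to run any least-fixed-point argument for $F^h$ on $[h,\1]$ you need $F^h(x)=F(x)\cap[h,\1]$ to be \emph{nonempty for every} $x\in[h,\1]$, not merely $\Fix(F^h)\neq\emptyset$. Your inheritance checks (V-ascendingness and chain-subcompleteness pass to $F^h$) are vacuous on empty values, and nothing in the hypotheses makes $F(x)\cap[h,\1]\neq\emptyset$ obvious. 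This is exactly where the paper invests real work: Lemma \ref{lm:VeinottinZhou} (a transfinite construction of a common bound) in the first approach, or Lemma \ref{lm:A'inA} together with the maximal-element argument behind \eqref{eq:F^hnon} in the second, both exploiting that $h$ dominates fixed points $s\in F(s)$ which can be joined into $F(x)$ by upper V-/C-ascendingness.

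The second gap you do flag but leave unresolved, calling it ``the crux'': in the complete-lattice case $F^h(x)$ need not have a least element, so the min-selection of $F^h$ does not exist and Fact \ref{ft:Tarski} cannot be quoted for the truncation. The paper's way out is to abandon single-valued selections: Theorem \ref{thm:myCalciano} works with the correspondence of \emph{minimal} (not minimum) elements, produced by Zorn's lemma once $F^h(x)$ is known to be chain-bounded below; and that chain-boundedness is the one genuinely new computation in the proof of Theorem \ref{thm:chaincpltvalue}: for $s\in S$ with $s<x$ and $m:=\inf_{F(x)}C$, upper V-ascendingness gives $s\vee m\in F(x)$, and $C\ge s\vee m$ forces $m\ge s\vee m$, hence $m\ge h$, so $m$ is the required lower bound in $F^h(x)$. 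Without an argument of this kind (or the transfinite-descent alternative you only gesture at), the complete-lattice half of the Fact is not proved; and your chain-subcomplete half, besides the nonemptiness issue above, still needs a precise mechanism for identifying the outcome of the transfinite iteration as the \emph{least} fixed point of $F^h$ rather than just some fixed point.
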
Fact \ref{ft:weakZhou} can be used to give a purely order-theoretic generalization \cite[Theorem 1.4]{yu2023generalization1} of Topkis's theorem for supermodular games.

Our first main result Theorem \ref{thm:chaincpltvalue} assumes a hypothesis weaker than both conditions in Fact \ref{ft:weakZhou}.  Example \ref{eg:Zhou>wkZhou} shows that Theorem \ref{thm:chaincpltvalue} is a strict generalization of Fact \ref{ft:weakZhou}.
\begin{thm}\label{thm:chaincpltvalue}
	Let $F:X\to 2^X$ be a V-ascending correspondence. Suppose that for every $x\in X$, the value  $F(x)$ is nonempty, chain-complete downwards and chain-bounded above. Then $\Fix(F)$ is a nonempty complete lattice.
\end{thm}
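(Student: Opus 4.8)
The plan is to deduce the complete-lattice structure of $\Fix(F)$ from a single existence engine applied to the truncations $F^h$, following the Knaster--Tarski/Veinott--Zhou strategy but replacing the missing lattice completeness of the values by the two one-sided hypotheses. I would first record the order-theoretic reduction: a poset is a complete lattice as soon as every one of its subsets (the empty set included) admits a least upper bound, since the infimum of any $S$ is then the supremum of the set of its lower bounds. Thus it suffices to produce, for every $S\subseteq\Fix(F)$, the least upper bound of $S$ inside $\Fix(F)$. Setting $h=\sup_X S$ (available since $X$ is a complete lattice), the identity $\Fix(F^h)=\Fix(F)\cap[h,\1]$ identifies the fixed points of $F^h$ with exactly those fixed points of $F$ that are upper bounds of $S$; hence $\min\Fix(F^h)$, once shown to exist, is precisely $\sup_{\Fix(F)}S$. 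Taking $S=\emptyset$ (so $h=\0$ and $F^{\0}=F$) this also yields the least element of $\Fix(F)$ and, in particular, that $\Fix(F)\neq\emptyset$. Everything therefore reduces to one claim: for a suitable $h$, the truncation $F^h$ on the complete lattice $[h,\1]$ has a least fixed point.

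The existence engine I would set up is the multivalued analogue of the formula $\min\Fix(f)=\inf_X(B_f)$ from Fact \ref{ft:Tarski}. Working on $[h,\1]$, I would consider the set $B_{F^h}$ of points $x$ whose value $F(x)$ meets $[h,x]$, choose for each such $x$ an element of $F(x)$ below $x$, and descend: along any chain of such points V-ascendingness lets me transport the chosen elements into a common value, and \emph{chain-completeness downwards} of that value supplies the infimum needed at limit stages, so that the descent never leaves the graph of $F$. The limit of this transfinite descent is a fixed point, and a Tarski-style comparison (every element of $B_{F^h}$ dominates it) shows it to be the least one. Chain-completeness downwards is exactly the property that makes the infima in this argument land back inside the values; this is where the subcompleteness of Fact \ref{ft:VeinottZhou} and the chain-subcompleteness of Fact \ref{ft:weakZhou} have been weakened.

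Before the engine can be invoked for a nonempty $S$ I must know that $F^h$ actually has a fixed point, that is, that $h=\sup_X S\in A_F$: some element of $F(h)$ must lie above $h$. Here is where I would use \emph{chain-boundedness above} together with V-ascendingness. Fixing any $y\in F(h)$ and well-ordering $S$, I would build a transfinite increasing chain in $F(h)$ by successively joining with the members $s\in F(s)$ of $S$: since $s\le h$ and $s\in F(s)$, V-ascendingness keeps each join $y\vee s$ inside $F(h)$, while at limit stages chain-boundedness above furnishes an upper bound of the chain that again lies in $F(h)$. The terminal element of this chain is an element of $F(h)$ dominating every $s\in S$, hence dominating $h=\sup_X S$; thus $h\in A_F$. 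One also checks the routine inheritance that $F^h$ is V-ascending on $[h,\1]$ and that its nonempty values remain chain-complete downwards and chain-bounded above, so that the engine applies there verbatim.

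I expect the genuine difficulty to lie in the descent inside the engine rather than in the bookkeeping. The chosen below-elements $\phi(x)\in F(x)\cap[h,x]$ live in \emph{different} values as $x$ ranges over a descending chain, whereas chain-completeness downwards only controls infima \emph{within a single} value; the whole force of V-ascendingness must be spent transporting a cross-value descending chain into one value before its infimum can be taken and recognised as a point of the graph, and then in checking that this infimum is genuinely the \emph{least} fixed point and not merely a minimal one. Reconciling the per-value chain hypotheses with the cross-value nature of the iteration, while simultaneously securing minimality, is the crux; the nonemptiness argument of the previous paragraph and the final assembly into a complete lattice are comparatively soft once this is in place.
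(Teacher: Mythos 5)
Your skeleton coincides with the paper's: both reduce the theorem to producing, for each $S\subseteq\Fix(F)$ with $h=\sup_X S$, the least fixed point of the truncation $F^h$ on $[h,\1]$ (the paper packages this as Theorem \ref{thm:my13}, whose set $A'_F$ is exactly your collection of truncation points), and your argument that $h\in A_F$ by transfinitely joining a point of $F(h)$ with the elements $s\in F(s)$ for $s\in S$ is sound and close in spirit to Lemma \ref{lm:VeinottinZhou} and \eqref{eq:F^hnon}.

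The genuine gap is the step you dismiss as ``routine inheritance'': that the values of $F^h$ remain chain-complete downwards, or even just chain-bounded below. For a nonempty chain $C$ in $F^h(x)=F(x)\cap[h,\1]$, chain-completeness downwards of $F(x)$ produces $m:=\inf_{F(x)}C$, but nothing in the hypotheses forces $m\ge h$, so $m$ need not lie in $F^h(x)$; this is precisely why Theorem \ref{thm:chaincpltvalue} is harder than Fact \ref{ft:weakZhou}, where chain-subcompleteness in $X$ makes the inheritance genuinely routine. The paper's resolution --- its verification of Condition \ref{it:lbinFh} of Theorem \ref{thm:my13} --- is the one idea your plan is missing: because $h=\sup_X S$ with $S\subseteq\Fix(F)$, for each $s\in S$ with $s<x$ one applies \emph{upper} V-ascendingness to $s\in F(s)$ and $m\in F(x)$ to get $s\vee m\in F(x)$; since $C\ge m$ and $C\ge h\ge s$, the element $s\vee m$ is a lower bound of $C$ lying in $F(x)$ and dominating $m$, so the defining property of $m=\inf_{F(x)}C$ forces $m=s\vee m\ge s$, whence $m\ge\sup_X S=h$ and $m$ is the required lower bound inside $F^h(x)$. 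Note the twist: it is the \emph{upper} half of V-ascendingness, applied to the fixed points below $h$, that keeps the descent above $h$; your sketch spends upper V-ascendingness only on nonemptiness and expects lower V-ascendingness plus per-value chain-completeness to carry the descent, which is exactly where it stalls --- and you yourself flag this as an unresolved ``crux.'' Once this lemma is supplied, your transfinite-descent engine and the paper's (minimal selections $L_h$ fed into the extremal fixed point theorem, Theorem \ref{thm:myCalciano}) are essentially interchangeable.
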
 Another main result, Theorem \ref{thm:myZhou2}, also generalizes Fact \ref{ft:VeinottZhou}. Compared to Theorem \ref{thm:chaincpltvalue}, it weaken the V-ascending condition further, but requires chain-subcomplete values. Section \ref{sec:game} gives an application of Theorem \ref{thm:myZhou2}, which analyzes Nash equilibria for an extension of quasi-supermodular games. 
\subsection*{Fixed point theorems for general order}
We collect several fixed point theorems for maps on posets or pseudo-ordered sets.

For posets, Abian and Brown \cite[Thm.~2]{abian1961theorem} relax the completeness hypothesis in Fact \ref{ft:Tarski}. It is an existence result, which does not describe the order structure of the fixed points.
\begin{ft}\label{ft:AbianBrown}
	Let $X$ be a nonempty poset where  every nonempty well-ordered subset  has a least upper bound. Let $f:X\to X$ be an increasing map. Assume that there is $a\in X$  with $a\le f(a)$. Then $f$ has a fixed point.  
\end{ft}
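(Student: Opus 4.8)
The plan is to build, by transfinite recursion starting from $a$, an increasing chain of iterates of $f$ and to show it must stabilize at a fixed point. Since $f$ maps $Y:=\{x\in X\mid x\ge a\}$ into itself (if $x\ge a$ then $f(x)\ge f(a)\ge a$) and $Y$ inherits the hypothesis that every nonempty well-ordered subset has a least upper bound, I may work inside $Y$. Define $x_0:=a$; put $x_{\alpha+1}:=f(x_\alpha)$ at successor stages; and at a limit ordinal $\lambda$ set $x_\lambda:=\sup\{x_\beta\mid \beta<\lambda\}$, provided this supremum exists.

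The core of the argument is a single transfinite induction establishing two invariants for every ordinal $\alpha$: (i) the map $\beta\mapsto x_\beta$ is nondecreasing on $[0,\alpha]$, and (ii) $x_\alpha\le f(x_\alpha)$. Invariant (i) guarantees that at each limit $\lambda$ the set $\{x_\beta\mid\beta<\lambda\}$ is a well-ordered chain — indeed, the weakly monotone image of an ordinal is well-ordered, since the least index attaining a value in any nonempty subset produces that subset's least element — so the hypothesis on $X$ furnishes the supremum $x_\lambda$ and the recursion is well defined. The successor case of (i) is immediate from (ii) together with monotonicity of $f$, and the limit case of (i) holds because $x_\lambda$ is by definition an upper bound of all earlier terms. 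The delicate point is the limit case of (ii): for a limit $\lambda$, each $\beta<\lambda$ gives $x_\beta\le x_{\beta+1}=f(x_\beta)\le f(x_\lambda)$ using $x_\beta\le x_\lambda$ and monotonicity of $f$, so $f(x_\lambda)$ is an upper bound of $\{x_\beta\mid\beta<\lambda\}$ and therefore $x_\lambda=\sup\{x_\beta\mid\beta<\lambda\}\le f(x_\lambda)$.

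Finally I would invoke a cardinality/termination argument: were $x_\alpha<x_{\alpha+1}$ for every ordinal $\alpha$, then invariant (i) would make $\alpha\mapsto x_\alpha$ a strictly increasing, hence injective, map from the ordinals into the set $X$, which is impossible (fixing any ordinal $\kappa$ with $|\kappa|>|X|$ already forbids injectivity of $\beta\mapsto x_\beta$ on $\kappa$). Hence some $x_\alpha=x_{\alpha+1}=f(x_\alpha)$, and this $x_\alpha$ is the desired fixed point. I expect the main obstacle to be the bookkeeping of the transfinite recursion — precisely, verifying that the two invariants propagate through limit ordinals (the limit case of (ii) above) so that every supremum the construction calls for actually exists — rather than the termination step, which is routine set theory.
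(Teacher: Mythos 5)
Your proof is correct. The paper itself does not reprove this statement (it is quoted as a Fact from Abian and Brown); the closest in-paper argument is the proof of the multivalued generalization, Theorem \ref{thm:AbianBrown}, whose single-valued specialization recovers it, and that proof takes a genuinely different route from yours. There, one first shows that the ascent set $A_f=\{x\in X: x\le f(x)\}$ (nonempty because $a\in A_f$) is closed under least upper bounds of nonempty well-ordered subsets, and then, assuming no fixed point exists, observes that $f$ restricts to a self-map $g$ of $A_f$ with $x<g(x)$ everywhere and invokes Moroianu's fixed point theorem to reach a contradiction. You instead run the transfinite iteration $x_0=a$, $x_{\alpha+1}=f(x_\alpha)$, $x_\lambda=\sup_{\beta<\lambda}x_\beta$ explicitly and terminate it with a Hartogs-type cardinality bound; this is essentially the Bourbaki--Witt/Moroianu machinery unwound by hand. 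The paper's route is shorter but leans on an external black box, while yours is self-contained and makes the role of the well-ordered-supremum hypothesis transparent. The points you flag as delicate are all handled correctly: the nondecreasing image of an ordinal is well-ordered (the least index attaining a value in a nonempty subset yields its least element), so the hypothesis supplies each limit supremum; $x_\lambda\le f(x_\lambda)$ at limits because $f(x_\lambda)$ dominates every $x_{\beta+1}=f(x_\beta)$ and $x_\lambda$ is the \emph{least} upper bound; and perpetual strict increase would make $\alpha\mapsto x_\alpha$ injective on arbitrarily large ordinals, which is impossible. One minor remark: the restriction to $Y=\{x\ge a\}$ is harmless (and you correctly note $Y$ inherits the supremum property, since the least upper bound of a nonempty well-ordered subset of $Y$ dominates $a$), but it is not actually needed for the argument.
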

A nonempty poset is called \emph{chain-complete} in \cite[p.53]{markowsky1976chain}, if it is chain-complete upwards (Definition \ref{df:chaincompleteZhou}) and has a least element.\footnote{Some authors call a nonempty poset chain-complete if it is chain-complete both upwards and downwards.} Complete lattices are chain-complete.
Markowsky \cite[Thm.~9]{markowsky1976chain} gives an analog of Fact \ref{ft:Tarski} for chain-complete posets.
\begin{ft}\label{ft:Markowsky}Let $f:X\to X$ be an increasing map on a poset.
	If $X$ is  chain-complete,   then so is $\Fix(f)$.
\end{ft}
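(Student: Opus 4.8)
The plan is to derive both halves of the conclusion---that $\Fix(f)$ has a least element and that every nonempty chain in $\Fix(f)$ has a supremum lying in $\Fix(f)$---from a single construction: for any point $a$ with $a\le f(a)$, the map $f$ admits a \emph{least} fixed point among the elements $\ge a$. This refines Fact \ref{ft:AbianBrown}, which only asserts the existence of some fixed point; it is precisely the extra leastness that forces the chain-completeness of $\Fix(f)$ to come out.

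To build this least fixed point above $a$, I would iterate $f$ transfinitely from $a$: set $x_0=a$, $x_{\alpha+1}=f(x_\alpha)$, and $x_\lambda=\sup_X\{x_\beta : \beta<\lambda\}$ at limit ordinals $\lambda$. Since $a\le f(a)$ and $f$ is increasing, a routine transfinite induction shows that the family $(x_\alpha)$ is increasing, so at each stage the relevant set is a nonempty chain whose supremum exists by chain-completeness upwards; thus every $x_\alpha$ is well-defined. The family cannot increase strictly through all ordinals (that would inject a proper class into the set $X$), so it stabilizes: $x_{\alpha+1}=x_\alpha$ for some $\alpha$, yielding a fixed point $p:=x_\alpha\ge a$. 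That $p$ is the \emph{least} fixed point $\ge a$ follows from a second transfinite induction: if $q=f(q)$ and $q\ge a$, then $x_\beta\le q$ for all $\beta$ (the successor step uses monotonicity of $f$, the limit step uses that $q$ is an upper bound), whence $p\le q$.

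With this in hand, applying the construction to $a=\0=\min X$ (legitimate since $\0\le f(\0)$ trivially) produces the least fixed point of $f$, so $\Fix(f)$ has a least element. For the chain condition, let $C\subseteq\Fix(f)$ be a nonempty chain and put $s:=\sup_X C$, which exists by chain-completeness. For each $c\in C$ we have $c=f(c)\le f(s)$ because $c\le s$, so $f(s)$ is an upper bound of $C$ and therefore $s\le f(s)$. Let $t$ be the least fixed point $\ge s$ furnished by the construction. Then $t\ge s\ge c$ for every $c\in C$, so $t$ is an upper bound of $C$ inside $\Fix(f)$; and any fixed point $u$ bounding $C$ satisfies $u\ge\sup_X C=s$, hence $u\ge t$ by leastness. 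Thus $t=\sup_{\Fix(f)} C$, which establishes that $\Fix(f)$ is chain-complete.

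I expect the main obstacle to be securing the \emph{leastness}, rather than the mere existence, of the fixed point above a given $a$: Fact \ref{ft:AbianBrown} alone delivers only some fixed point, whereas here one must control the entire transfinite orbit $(x_\alpha)$ to guarantee it sits below every competing fixed point $\ge a$. A secondary point demanding care is the well-definedness of the iteration at limit stages, which rests on verifying that the increasing family $(x_\beta)_{\beta<\lambda}$ is a chain whose supremum $X$ actually provides---this is exactly where chain-completeness upwards is invoked, and it is also what guarantees that $s=\sup_X C$ exists and that $f$ carries $\{x\in X : x\ge s\}$ into itself.
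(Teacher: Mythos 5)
Your proof is correct. Note first that the paper does not prove Fact \ref{ft:Markowsky} directly: it is cited from Markowsky and then recovered as the single-valued specialization of Theorem \ref{thm:Markowski}. The two arguments differ in how they produce the least fixed point above a post-fixed point $a$. You construct it explicitly by transfinite iteration $x_0=a$, $x_{\alpha+1}=f(x_\alpha)$, $x_\lambda=\sup_X\{x_\beta:\beta<\lambda\}$, with leastness coming for free from the induction showing $x_\beta\le q$ for every fixed point $q\ge a$; the only set-theoretic input is that a strictly increasing ordinal-indexed family cannot be injected into a set. The paper instead introduces the set $S$ of points $x$ admitting $y\in F(x)$ with $x\le y\le\Fix(F)$, shows $S$ is closed under suprema of nonempty chains, and applies the Bourbaki--Witt theorem to a selection $g:S\to S$ to reach a contradiction if no least fixed point exists; this detour through $S$ and through selections is precisely what lets the argument survive the passage to correspondences, where there is no canonical orbit to iterate. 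The reduction of chain-completeness of $\Fix(f)$ to ``least fixed point above $\sup_X C$'' via restriction to $[\sup_X C,+\infty)$ is common to both proofs. Your route is the more elementary and self-contained one for single-valued maps (and yields the least-fixed-point refinement of Fact \ref{ft:AbianBrown} directly); the paper's buys the multivalued generalization.
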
Markowsky \cite[Cor., p.66]{markowsky1976chain} deduces Fact \ref{ft:Tarski} from Fact \ref{ft:Markowsky}.
The existence of the least element of $X$ in Fact \ref{ft:Markowsky}  is necessary. For example, let $X=\{a,b\}$ be an anti-chain with  two incomparable elements. Then $X$ is chain-complete upwards and downwards. Let $f:X\to X$ be the transposition. Then $f$ is increasing, but has no fixed points.

Our principal results of Section \ref{sec:ABM} generalize Facts \ref{ft:AbianBrown} and \ref{ft:Markowsky} from single-valued maps to multivalued correspondences.

Bhatta \cite[Theorem]{bhatta2005weak} studies the existence of least fixed point of increasing maps defined on  pseudo-ordered sets, relaxing the transitivity in the definition of partial order. Then Bhatta and George \cite[Thm.~3.4]{parameshwara2011some} extend Fact \ref{ft:Markowsky} to pseudo-ordered sets. 

\section{Generalization of Zhou's fixed point theorem}\label{sec:Zhou}
We state our second main result, Theorem \ref{thm:myZhou2}, which is a generalization of Fact \ref{ft:VeinottZhou}. It relies on some conceptions of multivalued increasingness. 
\begin{df}\cite[Ch.~4, Sec.~3]{veinott1992lattice}\label{df:Vascending} Let $X$ be a poset, and let $Y$ be a lattice. Let $F:X\to 2^Y$ be a correspondence. If for any  $x<x'$ in $X$,\footnote{Note the strict inequality here.} every $y\in F(x)$ and every $y'\in F(x')$,  we have 	$y\wedge y'\in F(x)$ (resp. $y\vee y'\in F(x')$), then $F$ is called \emph{lower }(resp. \emph{upper}) \emph{V-ascending}. If $F$ is both upper and lower  V-ascending, then $F$ is called \emph{V-ascending}. 
\end{df}
\begin{rk}An ascending correspondence is V-ascending. In Definition \ref{df:Vascending}, if  we have either $y\wedge y'\in F(x)$ or $y\vee y'\in F(x')$, then $F$ is called weakly ascending.   Upper/lower V-ascending correspondence is weakly ascending.  	Kukushkin \cite[Theorem 2.2]{kukushkin2013increasing} gives a sufficient condition for a weakly ascending correspondence to admit increasing selections.\end{rk}
\begin{df}\label{df:Cascending}
	Let $X,Y$ be two posets. Let $F:X\to 2^Y$ be a correspondence. If for any  $x<x'$ in $X$, every $y\in F(x)$ and every $y'\in F(x')$,  there is $u\in F(x')$ (resp. $v\in F(x)$) with $u\ge y'$ and $u\ge y$ (resp. $v\le y$ and $v\le y'$), then $F$ is called \emph{upper} (resp. \emph{lower}) \emph{C-ascending}. If $F$ is both upper and lower C-ascending, then $F$ is called \emph{C-ascending}.
\end{df}

\begin{rk}
If $F:X\to 2^Y$ is upper C-ascending  with nonempty values, then $F$ is upper increasing for the following reason.	For any $x<x'$ in $X$ and every $y\in F(x)$, take $y'\in F(x')$. Then there is $u\in F(x')$ with $u\ge y'\vee y\ge y$. 
\end{rk}
As Example \ref{eg:VneqC} shows, C-ascending condition is strictly weaker than V-ascending condition.
\begin{eg}\label{eg:VneqC}
	Let $X=\{0,1\}$ be a chain in $\R$. Let $Y=\{0,a,b,1/2,1\}$ be a lattice, with $0=\min Y$, $1=\max Y$, $a,b$ incomparable and $a\vee b=1/2$. Define $F:X\to 2^Y$ by $F(0)=\{0,a\}$, $F(1)=\{b,1\}$. Then $F$ is C-ascending but not upper V-ascending. In fact, one has $a\in F(0)$, $b\in F(1)$, but $a\vee b\notin F(1)$.
\end{eg}
\begin{df}\label{df:chaincompleteZhou}Let $X$ be a nonempty poset.
	\begin{enumerate} \item  If for every nonempty chain $C$  in $X$, there is $x\in X$ with $C\le x$ (resp. $x\le C$), then $X$ is called  \emph{chain-bounded above} (resp. \emph{below});
		\item  If for every nonempty chain $C$  in $X$, the element $\sup_XC$ (resp. $\inf_XC$) exists, then $X$ is  \emph{chain-complete upwards} (resp. \emph{downwards}).  
	\item A subset $S$ of $X$ is  \emph{chain-subcomplete upwards} (resp. \emph{downwards}) in $X$, if for every nonempty chain $C$ in $S$, the element $\sup_XC$ (resp. $\inf_XC$) exists in $S$. \end{enumerate} \end{df}
A chain-complete upwards (resp. downwards) poset  is chain-bounded above (resp. below).  

Theorem \ref{thm:myZhou2} generalizes Fact \ref{ft:VeinottZhou} in three aspects. First, the values $F(x)$ are assumed to be chain-bounded  above and chain-subcomplete downwards rather than to be subcomplete. Second, we do not require the values  $F(x)$ to be  sublattices of $X$. Third,  V-ascending condition is strictly weaker than ascending condition, as Example \ref{eg:01abZhou} shows.
\begin{thm}\label{thm:myZhou2}
	Let $X$ be a nonempty complete lattice. Let  $F:X\to 2^X$ be a correspondence. Assume that \begin{enumerate}[label=\emph{\alph*})]
		\item\label{it:valuehalfsubcpltZhou} for every $x\in X$, the value $F(x)$ is  nonempty, chain-subcomplete downwards (resp.  upwards)  in $X$; 
		\item\label{it:Fischainbddabove} for every $x\in X$, the set $F(x)$ is chain-bounded above (resp. below); 
		\item the correspondence $F$ is lower (resp. upper) V-ascending and upper (resp. lower) C-ascending. 
	\end{enumerate}
	Then $\Fix(F)$ is a nonempty complete lattice. 
\end{thm}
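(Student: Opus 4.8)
The plan is to prove the non-parenthetical version (values chain-subcomplete downwards and chain-bounded above, $F$ lower V-ascending and upper C-ascending); the parenthetical version then follows by passing to the dual lattice $X^{\op}$. The heart of the argument is a single lemma: \emph{any} correspondence meeting these hypotheses on a complete lattice has both a least and a greatest fixed point, namely $\inf_X B_F$ and $\sup_X A_F$. Granting this, the complete-lattice structure follows by truncation. For nonempty $S\subseteq\Fix(F)$ I set $h:=\sup_X S$ and apply the lemma to the upward truncation $F^h$ on the complete lattice $[h,\1]$ to get its least fixed point $\ell$; since any fixed point of $F$ dominating $S$ lies in $\Fix(F^h)=\Fix(F)\cap[h,\infty)$, the point $\ell$ is exactly $\sup_{\Fix(F)}S$. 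Dually, applying the lemma to the downward truncation $F_m$ with $m:=\inf_X S$ and taking its greatest fixed point yields $\inf_{\Fix(F)}S$. Hence every nonempty subset of $\Fix(F)$ has a supremum and an infimum there, so $\Fix(F)$ is a nonempty complete lattice. I would first verify that $F^h$ and $F_m$ again satisfy all hypotheses; the one nontrivial point is that their values remain nonempty, which I would deduce from $h\in A_F$ (resp.\ $m\in B_F$) via upper C-ascending (resp.\ lower V-ascending), noting that chain-subcompleteness downwards and chain-boundedness above are inherited.

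So the real content is the lemma. For the \emph{least} fixed point I would show $b:=\inf_X B_F\in B_F$. First, $B_F$ is closed under binary meets: for $x_1,x_2\in B_F$ with witnesses $y_i\le x_i$ in $F(x_i)$, two applications of lower V-ascending, meeting against a running element of $F(x_1\wedge x_2)$, produce a witness below $x_1\wedge x_2$. I then well-order $B_F$ and build a decreasing chain in $F(b)$ by transfinite recursion, at successors meeting the current element with a witness $y_x$ (legitimate since $b<x$ strictly) and at limits taking the infimum, which stays in $F(b)$ by chain-subcompleteness downwards. The terminal element $v\in F(b)$ satisfies $v\le\inf_X B_F=b$, so $b\in B_F$; one further application of lower V-ascending gives $v\in B_F$, hence $v\ge\inf_X B_F=b$ and $v=b\in\Fix(F)$. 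As $\Fix(F)\subseteq B_F$, this $b$ is the least fixed point. This half is clean precisely because chain-subcompleteness downwards lets the accumulated meets converge inside the value.

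The \emph{greatest} fixed point is where I expect the main difficulty, and where the chain-bounded-above hypothesis is indispensable. Here ``chain-bounded above'' must be read as: every nonempty chain in $F(x)$ has an upper bound \emph{lying in} $F(x)$ (in a complete lattice a bound merely in $X$ is automatic and useless), which is what makes the values amenable to Zorn-type arguments. Dually to the previous paragraph I show $a:=\sup_X A_F\in A_F$: $A_F$ is closed under binary joins by upper C-ascending, and I build an increasing chain in $F(a)$ by transfinite recursion over a well-ordering of $A_F$, at each successor using upper C-ascending (with $x<a$) to pass to an element of $F(a)$ dominating the witness $y_x\ge x$ and the current element, and at each limit invoking chain-boundedness above to secure an upper bound \emph{within} $F(a)$ from which to continue. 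The terminal $w\in F(a)$ dominates every $x\in A_F$, so $w\ge\sup_X A_F=a$; upper C-ascending then gives $w\in A_F$, whence $w\le\sup_X A_F=a$ and $w=a\in F(a)$. Thus $a$ is the greatest fixed point. The step I would write most carefully is the limit stage of this recursion: without the bound supplied \emph{inside} $F(a)$ by chain-boundedness above, the increasing partial joins would have a supremum only in $X$, not necessarily in $F(a)$ (the values are not assumed chain-subcomplete upwards), and the construction would collapse. This asymmetry is exactly what distinguishes Theorem \ref{thm:myZhou2} from the symmetric hypotheses of Fact \ref{ft:VeinottZhou}.
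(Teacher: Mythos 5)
Your extremal-fixed-point lemma and your treatment of suprema are sound and essentially reproduce the paper's first (Veinott-style) proof: the transfinite recursion showing $\sup_X A_F\in A_F$ (and, applied to $S\subseteq\Fix(F)$, showing $h=\sup_X S\in A_F$ and hence that $F^h$ has nonempty values) is the content of Lemma \ref{lm:VeinottinZhou}, and the existence of $\min\Fix(F)=\inf_X B_F$ and $\max\Fix(F)=\sup_X A_F$ is Lemma \ref{lm:Zhou2}, which the paper derives via Zorn's lemma and Theorem \ref{thm:myCalciano} rather than by direct recursion. The upward truncation $F^h$ does inherit every hypothesis (an upper bound of a nonempty chain in $F^h(x)$ supplied inside $F(x)$ automatically dominates $h$), so $\min\Fix(F^h)=\sup_{\Fix(F)}S$ goes through exactly as you describe.

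The gap is in your treatment of infima. The downward truncation $F_m(x)=F(x)\cap[\0,m]$ does \emph{not} inherit chain-boundedness above: the upper bound guaranteed inside $F(x)$ for a chain $C\subseteq F_m(x)$ need not lie below $m$ (take $F(x)=[0,1/2)\cup\{1\}$ in $X=[0,1]$ and $m=3/4$; then $F_m(x)=[0,1/2)$ is not chain-bounded above). For the same reason $F_m$ need not be upper C-ascending: the element $u\in F(x')$ with $u\ge y\vee y'$ may exceed $m$. These are precisely the two hypotheses your greatest-fixed-point construction consumes (at the limit and successor stages respectively), so the lemma cannot be applied to $F_m$, and your direct computation of $\inf_{\Fix(F)}S$ collapses; your parenthetical claim that ``chain-subcompleteness downwards and chain-boundedness above are inherited'' is false for $F_m$. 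Fortunately the step is superfluous: once $\Fix(F)$ is nonempty, has a least element, and every nonempty subset has a supremum in it, it is automatically a complete lattice, since $\inf_{\Fix(F)}T$ is the supremum in $\Fix(F)$ of the set of lower bounds of $T$, which is nonempty because it contains the least element. This is exactly how the paper concludes, citing \cite[Lemma 2.6]{yu2023generalization1}. Delete the $F_m$ step and invoke that fact instead; the rest of your argument then stands.
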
\begin{rk}The set $\Fix(f)$ in Fact \ref{ft:Tarski} and Theorem \ref{thm:myZhou2} may not be a sublattice of $X$ even if $X$ is finite.  For instance,	let $X=\{0,1,2\}\times \{0,1\}$, which is  a complete sublattice of $\R^2$. Define $f:X\to X$ by \[
	f(x)=\begin{cases}
		(2,1),&x\in \{(1,1),(2,0),(2,1)\},\\
		x,&\text{else.}
	\end{cases}
	\] Then $f$ is increasing. The set $\Fix(f)$ contains $(0,1)$ and $(1,0)$ but not their join in $X$, namely $(1,1)$. In particular, $\Fix(f)$ is not a join sublattice of $X$. Another example with an infinite $X$  is in \cite[Example 2.5.1]{topkis1998supermodularity}. 
\end{rk}
The proof of Theorem \ref{thm:myZhou2} is divided into two steps. The first step about the existence of extremal fixed points is detailed in Section \ref{sec:extremal}. Next, we apply the existence result to various ``truncations'' of $F$. To show that such a ``truncation'' have nonempty values, two different approaches are provided. The first approach explained in Section \ref{sec:1pf} is in the spirit of \cite[Ch.2, Lemma 8]{veinott1992lattice}. The second approach in Section \ref{sec:2pf} is similar to that in \cite[Ch.~2]{calciano2010theory}. 
\begin{eg}
	Let $S=[0,2]$ be a complete chain. Define a correspondence $F:S\to 2^S$ by $F(s)\equiv[0,1)\cup\{2\}$. Then $F$ is ascending. However,  $F(s)$ is not chain-subcomplete in $S$ for all $s\in S$. So, we cannot apply Fact \ref{ft:VeinottZhou} in this case. But Theorem \ref{thm:myZhou2} guarantees that the fixed points form a complete lattice. Therefore, Theorem \ref{thm:myZhou2} is a proper generalization of Fact \ref{ft:VeinottZhou}.
\end{eg}
\begin{eg}\label{eg:01abZhou}
	Let $X=\{\0,\1,a,b\}$ be a finite lattice, where $\0=\min X$, $\1=\max X$ and $a,b$ are incomparable. Define a correspondence $F:X\to 2^X$ by $F(\0)=\{\0\}$, $F(\1)=\{\1\}$ and $F(a)=F(b)=\{a,b\}$. Then $F$ is V-ascending but not ascending. Indeed, $F(a)$ is not a sublattice of $X$.
Moreover, $F(a)$ does not have any greatest element.  So, we cannot apply Theorem \ref{thm:CalZhou} directly. But Theorem \ref{thm:myZhou2} is applicable in this situation, so it is not covered by Theorem \ref{thm:CalZhou}.	\end{eg}
\begin{eg}
	For a  two-player  normal form game $(X_i,u_i)_{i=1,2}$, let $X=X_1\times X_2$. The best reply correspondence of player $i$ is denoted by $R_i:X_{-i}\to 2^{X_i}$. The joint best reply correspondence is \[R:X\to 2^X,\quad x\mapsto R_1(x_2)\times R_2(x_1).\] Then $\Fix(R)$ is the set of Nash equilibria. We give two examples where  $R$ satisfies the conditions of Theorem \ref{thm:myZhou2}. 
\begin{enumerate}\item Let $C=\{0\}\cup[1,2]$, which is a complete chain. Let $X_1=X_2=C$. Define  payoff functions $u_i:X\to \R$ ($i=1,2$) by \[u_1(x)=\begin{cases}
	0, &x=(1,0)\text{ or }(1,1),\\
	1,&\text{else,}
\end{cases},\quad u_2(x)=\begin{cases}
0,&x=(0,1)\text{ or }(1,1),\\
1,&\text{else.}
\end{cases}\]This defines a symmetric normal form game. For $i=1,2$,  one has \[R_i:C\to 2^C,\quad t\mapsto \begin{cases}
C\setminus\{1\}, & t\le 1,\\
C, &t>1.
\end{cases}\]The value $R(0,0)$ is not chain-subcomplete downwards in $X$, so one cannot apply Fact \ref{ft:VeinottZhou} to $R$. Let $h=(1,0)$. Then $R^h(h)=(1,2]\times (C\times \{1\})$ has no least element. Hence, one cannot apply Theorem \ref{thm:CalZhou} neither. 
\item Let $X_1=\{0, a, b,1 \}$ be as in Example \ref{eg:01abZhou}. 
Let $X_2=\{0, 1\}$. The payoff functions  are shown in the table below. The columns  (resp. rows) represent the strategies of Player $1$ (resp. $2$). In the table, the first (resp. second) number in each pair indicates the payoff for Player $1$ (resp. $2$).

\begin{table}[!h]
	\setlength{\extrarowheight}{2pt}
	\begin{tabular}{cc|c|c|c|c|}
		& \multicolumn{1}{c}{} & \multicolumn{2}{c}{Player $1$}&\multicolumn{1}{c}{}&\multicolumn{1}{c}{}\\
		& \multicolumn{1}{c}{} & \multicolumn{1}{c}{$0$}  & \multicolumn{1}{c}{$1$}&  \multicolumn{1}{c}{$a$} & \multicolumn{1}{c}{$b$} \\\cline{3-6}
		\multirow{2}*{Player $2$}  & $0$ & $(1,1)$ & $(1,1)$&$(0,1)$&$(0,1)$ \\\cline{3-6}
		& $1$ & $(0,0)$ & $(1,1)$&$(1,0)$&$(1,0)$\\\cline{3-6}
	\end{tabular}
\end{table} 

  From the table, one has \[R_1(0)=\{0,1\},\, R_1(1)=\{a,b,1 \},\, R_2(0)=R_2(a)=R_2(b)=0,\, R_2(1)=\{0,1\}.\]

The value  $R(0,1)$ is not a sublattice of $X$, so  Fact \ref{ft:VeinottZhou} is not applicable in this case. Take $h=(1,0)\in X$.  Then $R^h(0,1)$ is nonempty but does not admit a least element. Hence,  Theorem \ref{thm:CalZhou} is  not applicable neither.\end{enumerate} 
\end{eg}

\section{Existence of extremal fixed point}\label{sec:extremal}
We give several results concerning the existence of the largest/least fixed point.
\begin{thm}\label{thm:myCalciano}
	Let	$X$ be a nonempty poset. Let $F,M:X\to 2^X$ be two correspondences. Suppose that \begin{enumerate}[label=\emph{\Alph*})]
		\item \label{it:MinF}for every $x\in X$, the value $F(x)$ is nonempty and $M(x)\subset F(x)$;
		\item \label{it:Mmax}for every $x\in X$ and every $y\in F(x)$, there is $z\in M(x)$ with $y\le z$ (resp. $y\ge z$);
		\item  \label{it:Minc}for any $x<x'$ in $X$, every $y\in M(x)$ and every $y'\in M(x')$, we have $y\le y'$;
		\item\label{it:Anonempty} the subset $A_F$ (resp. $B_F$) is nonempty;
		\item\label{it:x*} the element $x^*:=\sup_X(A_F)$ (resp. $x_*:=\inf_X(B_F)$) exists.
	\end{enumerate}
	Then $x^*$ (resp. $x_*$) is the largest (resp. least) fixed point of $F$.
\end{thm}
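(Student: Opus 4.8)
The plan is to prove the statement about the largest fixed point $x^*=\sup_X(A_F)$; the assertion about $x_*=\inf_X(B_F)$ then follows by the order-dual argument (replacing $A_F,\le,\sup$ by $B_F,\ge,\inf$ and invoking the ``resp.'' clauses of \ref{it:Mmax} and \ref{it:x*}). Before the main argument I would record three preliminary facts. First, every value $M(x)$ is nonempty: by \ref{it:MinF} there is some $y\in F(x)$, and \ref{it:Mmax} then produces $z\in M(x)$ with $z\ge y$. Second, $A_F$ can be read off from $M$ alone, namely $x\in A_F$ iff $M(x)\cap[x,+\infty)\neq\emptyset$: the ``if'' direction uses $M(x)\subset F(x)$ from \ref{it:MinF}, while the ``only if'' direction lifts a witness $y\in F(x)\cap[x,+\infty)$ to some $z\in M(x)$ with $z\ge y\ge x$ via \ref{it:Mmax}. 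Third, $\Fix(F)\subset A_F$, since a fixed point $x\in F(x)$ automatically lies in $F(x)\cap[x,+\infty)$; hence every fixed point is $\le x^*$, and the whole theorem reduces to showing that $x^*$ is itself a fixed point (note that \ref{it:Anonempty} and \ref{it:x*} guarantee $x^*$ exists).

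The core of the proof is then to establish $x^*\in F(x^*)$ in two steps. Step one shows $x^*\in A_F$. Fix any $w\in M(x^*)$ and argue by contradiction: if $x^*\notin A_F$, then every $x\in A_F$ satisfies $x<x^*$ strictly, so by the second preliminary fact I can choose $z\in M(x)$ with $z\ge x$, and \ref{it:Minc} applied to $x<x^*$ gives $z\le w$; thus $x\le w$ for all $x\in A_F$, whence $x^*=\sup_X(A_F)\le w$. But $w\in M(x^*)\subset F(x^*)$ with $w\ge x^*$ says exactly $x^*\in A_F$, a contradiction. Step two: since $x^*\in A_F$, the second preliminary fact lets me pick $z^*\in M(x^*)$ with $z^*\ge x^*$; suppose for contradiction $z^*>x^*$. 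Choosing $u\in M(z^*)$ and applying \ref{it:Minc} to the strict pair $x^*<z^*$ (with $z^*\in M(x^*)$ and $u\in M(z^*)$) yields $z^*\le u\in F(z^*)$, so $z^*\in A_F$ and hence $z^*\le\sup_X(A_F)=x^*$, contradicting $z^*>x^*$. Therefore $z^*=x^*$, so $x^*=z^*\in M(x^*)\subset F(x^*)$ is a fixed point, and by the third preliminary fact it is the largest one.

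The step I expect to be the main obstacle — and the reason this is not a one-line transcription of Tarski's proof of Fact~\ref{ft:Tarski} — is that the monotonicity hypothesis \ref{it:Minc} compares $M$-values only across \emph{strict} inequalities $x<x'$. Consequently I cannot directly compare an element of $M(x^*)$ with $x^*$, nor assert that the ``selection'' $M$ behaves monotonically at the boundary element $x^*$ itself. Both contradiction arguments above are designed precisely to sidestep this: each time I push a putative strict inequality one level up through $M$ (using \ref{it:Minc} on a genuine strict pair) and then collide it with the defining property of $x^*=\sup_X(A_F)$. Getting the logical order right — first locating $x^*$ inside $A_F$, and only then pinning down a witness $z^*\in M(x^*)$ and forcing $z^*=x^*$ — is the delicate part; the remaining bookkeeping (nonemptiness of the values and the dual version for $x_*$) is routine.
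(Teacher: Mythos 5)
Your proposal is correct and follows essentially the same route as the paper: you first identify $A_F$ with $A_M$, then show $x^*\in A_F$ by the contradiction argument that the paper isolates as Lemma~\ref{lm:7} (applied with $S=A_F$), and finally force a witness $z^*\in M(x^*)$ with $z^*\ge x^*$ to equal $x^*$ by pushing a putative strict inequality through Assumption~\ref{it:Minc} once more --- exactly the paper's two-step argument, with the lemma inlined rather than stated separately. No gaps; the handling of the strictness in \ref{it:Minc} and the nonemptiness of the $M(x)$ are both taken care of correctly.
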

\begin{proof}
	By symmetry, it suffices to prove the result without parentheses. By Assumptions \ref{it:MinF} and \ref{it:Mmax}, one has $A_F=A_M$. Then  by Assumption \ref{it:Minc} and  Lemma \ref{lm:7}, one has	$x^*\in A_M$. So there is $y^*\in M(x^*)$ with $x^*\le y^*$.  We claim that $x^*$ is a fixed point of $M$. Assume the contrary, so $x^*<y^*$. By Assumption \ref{it:Minc},  for every $z\in M(y^*)$, we have $y^*\le z$, and whence $y^*\in A$. Then $y^*\le x^*$, a contradiction.  From Assumption \ref{it:MinF} and the claim, $x^*$ is a fixed point of $F$.
	
	We check that $x^*$ is the largest fixed point of $F$. Indeed, by Assumption \ref{it:Mmax}, for every  $x\in\Fix(F)$,  there is $y\in M(x)$ with $x\le y$.  Then $x\in A_M$ and $x\le x^*$. 
\end{proof}
\begin{rk}
	Assumption \ref{it:Minc} in Theorem \ref{thm:myCalciano} cannot be relaxed to upper increasingness of $M$. For example, let $X$ be as in Example \ref{eg:01abZhou} and define correspondences $F,M:X\to 2^X$ by $M(x)=F(x)=\{a,b\}$ for every $x\in X$. Then Assumptions \ref{it:MinF} and \ref{it:Mmax} are satisfied. The correspondence $M$ is upper and lower increasing. The subset $A_F$ is $\{\0,a,b\}$.  Then $x^*=\1$ which is not a fixed point of $F$.
\end{rk}
\begin{lm}\label{lm:7}Let $X$ be a nonempty poset. Let $M:X\to 2^X$ be a correspondence of nonempty values. Assume that for any $x<x'$ in $X$, every $y\in M(x)$ and every $y'\in M(x')$, one has $y\le y'$.
Let 	$S$ be a nonempty subset of $A_M$. If $z=\sup_XS$ exists,	then $z\in A_M$. \end{lm}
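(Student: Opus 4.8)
The plan is to unwind the two definitions and then split into a trivial case and a main case according to whether $z$ lies in $S$. Recall that $x\in A_M$ means precisely that there is some $y\in M(x)$ with $y\ge x$, and that the stated hypothesis on $M$ says exactly that $M(x)\le M(x')$ (every element of $M(x)$ lies below every element of $M(x')$) whenever $x<x'$. The goal is to exhibit some $w\in M(z)$ with $w\ge z$.

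First I would dispose of the case $z\in S$: since $S\subseteq A_M$ by hypothesis, we immediately get $z\in A_M$ with nothing further to prove. It is worth isolating this case because the monotonicity hypothesis is stated only for \emph{strict} inequalities $x<x'$, so it cannot be invoked when the supremum is itself attained inside $S$.

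In the remaining case every $s\in S$ satisfies $s<z$, since $z=\sup_X S$ is an upper bound of $S$ and $z\notin S$. Using that $M$ has nonempty values, fix any $w\in M(z)$. For each $s\in S$, membership $s\in A_M$ provides $y_s\in M(s)$ with $y_s\ge s$; applying the monotonicity hypothesis to the strict inequality $s<z$ gives $y_s\le w$, whence $w\ge y_s\ge s$. Thus $w$ is an upper bound of $S$, so $w\ge \sup_X S=z$. Since $w\in M(z)$ and $w\ge z$, we conclude $z\in A_M$.

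I do not anticipate any genuine difficulty in this argument; the only point that requires care is the separation of the case $z\in S$, where the strict-inequality monotonicity assumption is unavailable, from the main case, where it can be applied to each $s<z$ to force $w$ to dominate all of $S$ and hence to dominate $z$.
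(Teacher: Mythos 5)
Your proof is correct and is essentially the paper's argument: both hinge on taking an arbitrary element of $M(z)$ and showing, via the strict-inequality monotonicity applied to $s<z$, that it is an upper bound of $S$ and hence dominates $z=\sup_X S$. The only difference is presentational — the paper argues by contradiction (assuming $z\notin A_M$ forces $z\notin S$ and hence $s<z$ for all $s\in S$), whereas you isolate the case $z\in S$ up front and then argue directly; the underlying mechanism is the same.
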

\begin{proof}Assume the contrary  $z\notin A_M$. Since $M(z)$ is nonempty, there is $y'\in M(z)$. For every $x\in S$, we have $x<z$. By definition of $A_M$, there is $y\in M(x)$ with $x\le y$.  By assumption, one has $y\le y'$. Thus,  $x\le y'$ for all $x\in S$. Then $z\le y'$. It implies $z\in A_M$, which is a contradiction. 
\end{proof}

Using Theorem \ref{thm:myCalciano}, we recover an extremal fixed point theorem due to  Calciano.
\begin{cor}[Calciano, {\cite[Theorem 1]{calciano2007games}, \cite[Theorem 10]{calciano2010theory}}]
	Let $X$ be a nonempty poset.  Let $F:X\to 2^X$ be an upper increasing correspondence. Suppose that 
	\begin{enumerate}[label=\emph{\arabic*})]
		\item for every $x\in X$, the value $F(x)$ is nonempty with a greatest element;
		
		\item the subset $A_F$ is nonempty;
		\item the element $x^*=\sup_X(A_F)$ exists.
	\end{enumerate}
	Then $x^*$ is the largest fixed point of $F$.
\end{cor}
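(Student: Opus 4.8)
The plan is to derive this Corollary as a direct specialization of Theorem \ref{thm:myCalciano}, since the hypotheses are engineered to match. The key observation is that when each value $F(x)$ has a greatest element, we can take $M(x):=\{\max F(x)\}$, turning the auxiliary correspondence into (the graph of) a single-valued map $m(x):=\max F(x)$. I would then verify Assumptions \ref{it:MinF}--\ref{it:x*} of Theorem \ref{thm:myCalciano} one by one for this choice of $M$, working in the unparenthesized (``largest'') case.

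First I would check Assumptions \ref{it:MinF} and \ref{it:Mmax}. Since $F(x)$ is nonempty with a greatest element, $M(x)=\{\max F(x)\}$ is a nonempty subset of $F(x)$, giving \ref{it:MinF}. For \ref{it:Mmax}, every $y\in F(x)$ satisfies $y\le \max F(x)\in M(x)$ by definition of the greatest element, so the required $z$ exists. Assumptions \ref{it:Anonempty} and \ref{it:x*} are literally hypotheses 2) and 3) of the Corollary, so nothing is needed there.

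The one step requiring actual argument is Assumption \ref{it:Minc}, and I expect this to be the main (if modest) obstacle, since it is precisely where the upper increasingness of $F$ must be converted into a monotonicity statement about $M$. I would argue as follows: fix $x<x'$ in $X$ with $y=\max F(x)\in M(x)$ and $y'=\max F(x')\in M(x')$. Because $F$ is upper increasing, for the element $y\in F(x)$ there exists some $w\in F(x')$ with $w\ge y$. But $y'=\max F(x')\ge w$, and hence $y\le w\le y'$, which is exactly $y\le y'$. This confirms \ref{it:Minc}.

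With all five assumptions verified, Theorem \ref{thm:myCalciano} immediately yields that $x^*=\sup_X(A_F)$ is the largest fixed point of $F$, completing the proof. A remark worth flagging: the choice $M=\max F$ means the conclusion $A_F=A_M$ used inside Theorem \ref{thm:myCalciano} holds automatically here, so no separate verification of that identity is needed at the Corollary level. The symmetric ``least'' version, using $M(x)=\{\min F(x)\}$ together with lower increasingness, would follow by the same dualization already built into Theorem \ref{thm:myCalciano}, though the Corollary as stated only asserts the largest-fixed-point conclusion.
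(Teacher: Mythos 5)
Your proposal is correct and follows essentially the same route as the paper: both take $M(x)=\{\max F(x)\}$ and reduce to Theorem \ref{thm:myCalciano}, with the only substantive step being the verification of Assumption \ref{it:Minc} via upper increasingness (find $b\in F(x')$ with $b\ge y$, then $b\le y'=\max F(x')$). Your write-up is if anything slightly more explicit than the paper's in spelling out the remaining assumptions, but there is no difference in substance.
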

\begin{proof}For every $x\in X$, let $M(x)$ to be the set of the greatest element of $F(x)$.
 We check Assumption \ref{it:Minc} in Theorem \ref{thm:myCalciano}. Since $F$ is upper increasing, for any $x<x'$ in $M$, every $y\in M(x)$ and every $y'\in M(x')$,  there is $b\in F(x')$ with $y\le b$. Since $y'$ is the largest element of $F(x')$, we have $b\le y'$. Thus, we obtain $y\le y'$. The result follows from Theorem \ref{thm:myCalciano}.
\end{proof}

Lemma \ref{lm:Zhou2} is used in one proof of Theorem \ref{thm:myZhou2}. If further  $F$ is ascending and every value is subcomplete in $X$, then Lemma \ref{lm:Zhou2} specializes to \cite[Theorem 2.5.1 (a)]{topkis1998supermodularity}.
\begin{lm}\label{lm:Zhou2}
	Let $X$ be a nonempty complete lattice. Let $F:X\to 2^X$ be an upper (resp. lower) C-ascending correspondence. Assume that  for every $x\in X$, the value $F(x)$ is nonempty and chain-bounded above (resp. below). 	Then $F$ has a largest  (resp. least) fixed point $\sup_X(A_F)$ (resp. $\inf_X(B_F)$).
\end{lm}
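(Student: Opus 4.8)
The plan is to deduce the statement from Theorem~\ref{thm:myCalciano} by constructing a suitable subcorrespondence $M$. Since the two cases are interchanged by reversing the order on $X$, I would only prove the assertion without parentheses, so that $F$ is upper C-ascending, each value $F(x)$ is nonempty and chain-bounded above, and the goal is to show that $x^{*}:=\sup_X(A_F)$ is the largest fixed point of $F$. The correspondence I propose to feed into Theorem~\ref{thm:myCalciano} is
\[
M:X\to 2^X,\qquad M(x):=\{\,y\in F(x)\mid y\text{ is a maximal element of }F(x)\,\}.
\]

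With this choice, Assumption~\ref{it:MinF} is immediate since $M(x)\subset F(x)$ and $F(x)\neq\emptyset$, and the remaining verifications split naturally. The step I expect to be the crux is Assumption~\ref{it:Mmax} (every $y\in F(x)$ lies below some $z\in M(x)$), because it also forces $M(x)$ to be nonempty. Here I would read the hypothesis carefully: saying that the \emph{set} $F(x)$ is chain-bounded above means, by Definition~\ref{df:chaincompleteZhou}, that every nonempty chain in $F(x)$ admits an upper bound lying \emph{inside} $F(x)$ — precisely the hypothesis of Zorn's lemma for the poset $F(x)$. Applying Zorn to the up-set $\{z\in F(x)\mid z\ge y\}$, whose nonempty chains still have upper bounds in $F(x)$ (automatically $\ge y$, hence in the up-set), produces a maximal element of $F(x)$ dominating $y$; such an element is maximal in all of $F(x)$, so it belongs to $M(x)$. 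This yields Assumption~\ref{it:Mmax}, and in particular $M(x)\neq\emptyset$.

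For Assumption~\ref{it:Minc}, take $x<x'$, $y\in M(x)$ and $y'\in M(x')$. Upper C-ascendingness gives $u\in F(x')$ with $u\ge y$ and $u\ge y'$; since $y'$ is maximal in $F(x')$ and $u\ge y'$, we get $u=y'$, whence $y\le u=y'$, which is exactly the strong monotonicity required. Assumption~\ref{it:Anonempty} holds because $\0=\min X$ satisfies $\0\le y$ for any $y\in F(\0)\neq\emptyset$, so $\0\in A_F$; and Assumption~\ref{it:x*} holds because $X$ is a complete lattice, so $x^{*}=\sup_X(A_F)$ exists. Theorem~\ref{thm:myCalciano} then delivers that $x^{*}$ is the largest fixed point of $F$, completing the plan. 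The parenthetical statement follows by the dual construction, taking $M(x)$ to be the set of minimal elements of $F(x)$, using lower C-ascendingness and chain-boundedness below, and replacing $x^{*}$ with $\inf_X(B_F)$.
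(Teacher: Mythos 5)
Your proposal is correct and follows essentially the same route as the paper's own proof: both take $M(x)$ to be the set of maximal elements of $F(x)$, obtain Assumption \ref{it:Mmax} by applying Zorn's lemma to the up-set $\{z\in F(x):z\ge y\}$ using chain-boundedness above of $F(x)$, derive Assumption \ref{it:Minc} from upper C-ascendingness together with maximality of $y'$, and conclude via Theorem \ref{thm:myCalciano}. No gaps.
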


\begin{proof}By symmetry, it suffices to prove  the statement  without parentheses.
	
	For every $x\in X$, let $M(x)$ be the set of maximal elements of $F(x)$. We verify the conditions of Theorem \ref{thm:myCalciano}. Assumption \ref{it:MinF} holds. 
	
	For every $y\in F(x)$, consider $D:=\{z\in F(x):y\le z\}$. Then $y\in D$. Because $F(x)$ is chain-bounded above,  for every nonempty chain $C$ in $D$, there is $y'\in F(x)$ such that $C\le y'$. Hence, one obtains $y'\ge y$ and  $y'\in D$. By Zorn's lemma, $D$ has a maximal element $\bar{y}$. Then $\bar{y}\in M(x)$ and $y\le \bar{y}$. In particular, Assumption \ref{it:Mmax} is satisfied.
	
	Since the correspondence $F$ is upper C-ascending, for any $x<x'$ in $X$, every $y\in M(x)$ and every $y'\in M(x')$,  there is $y''\in F(x')$ with $y''\ge y\vee y'$. Now that  $y'$ is a maximal element of $F(x')$, we have $y'=y''\ge y$. Assumption \ref{it:Minc} is true.
	
	Because $X$ is complete, $\0=\min X$ exists. As $M(\0)$ is nonempty, one has  $\0\in A_F$. Thus,  Assumption \ref{it:Anonempty} is verified.
	By completeness of $X$ again, the element $x^*:=\sup_X (A_F)$ exists, so Assumption \ref{it:x*} is satisfied. Therefore, by Theorem \ref{thm:myCalciano}, the largest fixed point of $F$ is $x^*$.
\end{proof}
\section{Veinott's approach}\label{sec:1pf}
In the first approach to proving Theorem \ref{thm:myZhou2}, we need to show that certain ``truncation'' of the correspondence still has nonempty values. For this, we need 
Lemma \ref{lm:VeinottinZhou}. 
\begin{lm}\label{lm:VeinottinZhou}
	Let $(S,\le)$ be a poset. Let $P,Q$ be two subsets of $S$. Assume that $Q$ is chain-bounded below. Fix $x\in P\cap Q$. Suppose that for every $p\in P\setminus\{x\}$ and every $q\in Q$, there is $q'\in Q$ such that $q'\le p$ and $q'\le q$. Then there exists $q_0\in Q$ with $q_0\le P$.
\end{lm}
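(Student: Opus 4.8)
The goal is to produce a single element $q_0\in Q$ lying below \emph{all} of $P$ simultaneously, whereas the hypothesis only lets us descend below one point $p$ at a time. The natural way to pass from ``below each point individually'' to ``below everything at once'' is an extremal-element argument: since $Q$ is chain-bounded below, the dual form of Zorn's lemma will furnish a minimal element of (a suitable piece of) $Q$, and I will argue that such a minimal element cannot fail to lie below any $p$.

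First I would not work with $Q$ itself but with the truncation $Q_x:=\{q\in Q: q\le x\}$. This set is nonempty because $x\in P\cap Q$ gives $x\in Q_x$. The point of the restriction is to dispose of the distinguished element $x$, which the hypothesis explicitly excludes (it only governs $p\in P\setminus\{x\}$): anything extracted from $Q_x$ automatically lies below $x$. I would then check that $Q_x$ inherits the chain-bounded-below property: given a nonempty chain $C\subseteq Q_x$, the set $Q$ provides a lower bound $y\in Q$ with $y\le C$, and since $y\le c\le x$ for any $c\in C$, in fact $y\in Q_x$.

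With $Q_x$ nonempty and chain-bounded below, the dual Zorn's lemma yields a minimal element $m$ of $Q_x$. By construction $m\le x$, so it remains to show $m\le p$ for every $p\in P\setminus\{x\}$. Here I apply the hypothesis with $q=m$: it produces $q'\in Q$ with $q'\le p$ and $q'\le m$. The key observation is that $q'\le m\le x$ forces $q'\in Q_x$, and then minimality of $m$ in $Q_x$ together with $q'\le m$ gives $q'=m$; hence $m=q'\le p$. Combining this with $m\le x$, the element $q_0:=m$ satisfies $q_0\le P$.

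The main obstacle, and the reason for the slightly unusual shape of the argument, is the excluded point $x$: a minimal element of the \emph{whole} of $Q$ need not lie below $x$ (it could be incomparable to it), so one cannot simply minimize over $Q$. Truncating to $Q_x$ is exactly what reconciles the ``$p\neq x$'' scope of the hypothesis with the ``$q_0\le P$'' conclusion, and the verification that minimality interacts correctly with the descent step — namely that $q'$ lands back inside $Q_x$ and is therefore equal to $m$ — is the crux on which everything turns.
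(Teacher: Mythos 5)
Your proof is correct, but it takes a genuinely different route from the paper's. The paper proves the lemma by transfinite induction: it well-orders $P\setminus\{x\}$, builds a decreasing transfinite family $(q_t)$ in $Q$ in which each $q_t$ is a lower bound for the corresponding initial segment of $P$ (using chain-boundedness of $Q$ at limit stages and the descent hypothesis at successor stages), and reads off the desired $q_0$ at the final stage. You instead truncate to $Q_x=\{q\in Q:q\le x\}$, verify that it is nonempty and chain-bounded below, invoke the dual form of Zorn's lemma to extract a minimal element $m$, and then use the descent hypothesis together with minimality to force $m\le p$ for every $p\in P\setminus\{x\}$; the truncation is exactly what handles the excluded point $x$, and your observation that a minimal element of all of $Q$ could be incomparable to $x$ correctly explains why it is needed. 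Both arguments rest on the axiom of choice (well-ordering principle versus Zorn), but yours is shorter and avoids the bookkeeping of the transfinite construction; it is also in the same spirit as the extremal-element arguments the paper itself uses elsewhere (e.g., the maximal-element step in the proof of Lemma \ref{lm:Zhou2}). The paper's inductive construction has the mild advantage of producing an explicit decreasing family of partial lower bounds, which mirrors Veinott's original argument, but for the statement as given your approach is fully adequate.
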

\begin{proof}
	The statement holds when $P=\{x\}$. Assume that $P\setminus\{x\}$ is nonempty. By the well ordering principle, there is a well order $\preceq$ on the nonempty set $P\setminus\{x\}$. Let $T$ be a copy of $P$ together with an exotic element $\infty$. Extend the order $\preceq$ from $P\setminus\{x\}$ to $T$ with $\min T=x$ and $\max T=\infty$. Then $(T,\preceq)$ is still a well ordered set. Define a map \[f:T\to P,\quad
	t\mapsto\begin{cases}
		t, & t\in P,\\
		x, & t=\infty.
	\end{cases}
	\]
	For every $t\in T$, let $P_t=\{f(t'):t'\preceq t\}$ and $Q_t:=\{q\in Q:q\le P_t\}$.
	By transfinite induction, we   prove that there is a family $(q_t)_{t\in T}$ with $q_t\in Q_t$ for all $t\in T$ and  $q_t\ge q_{t'}$ whenever $t\prec t'$ in $T$.
	
	For the base case $t=x$, one has $P_x=\{x\}$, then $Q_x$ is nonempty and has a largest element $q_x=x$.
	
	For $t_0(\neq x)\in T$,  assume that such a family $(q_t)_{t\prec t_0}$ with $q_x=x$ is constructed.
	As $Q$ is chain-bounded below, there exists $m\in Q$ such that $m\le q_t$ for every $t(\prec t_0)\in T$. In particular, one has $m\le x$. If $f(t_0)=x$, then $m\le f(t_0)$. We can take $q_{t_0}=m$. If $f(t_0)\neq x$,  by assumption there is  $q'\in Q$ with $q'\le m$ and $q'\le f(t_0)$. We can take $q_{t_0}=q'$. Since $P_{t_0}=\{f(t_0)\}\cup\cup_{t\prec t_0}P_t$, in both cases one has $q_{t_0}\in Q_{t_0}$ and $q_{t_0}\le q_t$ for every $t\prec t_0$. 
	As the induction is completed, the set $Q_{\infty}:=\{q\in Q:q\le P\}$ is nonempty.
\end{proof}

Now we are in a position to finish the proof of Theorem \ref{thm:myZhou2}.
\begin{proof}[First proof of Theorem \ref{thm:myZhou2}]By symmetry, it suffices to prove the statement in parentheses. For any $x\ge y$ in $X$, we prove that $F^y(x)$ is chain-bounded below. Indeed, for every nonempty chain $C$ inside, as $F(x)$ is chain-subcomplete downwards, $\inf_X(C)\in F(x)$. Since $y\le C$, one has $y\le \inf_X(C)$. Therefore, $\inf_X(C)$ is a lower bound on $C$ in $F^y(x)$.

	Therefore, by Lemma \ref{lm:Zhou2}, the poset $\Fix(F)$ is nonempty and has a least element. We show that $\Fix(F)$ is join-complete, i.e., $\sup_{\Fix(F)}(U)$ exists for every nonempty subset $U$ of $\Fix(F)$.
	
	By  completeness of $X$, the element $b:=\sup_XU$ exists. If $b\in \Fix(F)$, then $\sup_{\Fix(F)}(U)$ is $b$. Now assume  $b\notin \Fix(F)$. Because $F(b)$ is nonempty, one may fix $\beta\in F(b)$. Let $V:=\{\beta\}\cup U$. For every $u\in U$, we have $u\le b$. As $b\notin \Fix(F)$, we have a strict inequality $u<b$. For every $\beta'\in F(b)$, because $u\in F(u)$ and $F$ is upper C-ascending, there is $u'\in F(b)$ with $u'\ge u\vee \beta'$. Because $F(b)$ is chain-bounded above, by the dual of Lemma \ref{lm:VeinottinZhou}, there exists
	$w\in F(b)$ such that $w\ge V$. Since $V\supset U$, we have  $w\ge b$. From the first paragraph, every value of  $F^b:[b,\1]\to 2^{[b,\1]}$ is chain-bounded below. 
	
For every $s\in[b,\1]$,	we check   \begin{equation}\label{eq:F^bnon}
		F^b(s)\neq\emptyset.
	\end{equation} In fact, if $s=b$, then $w\in F^b(s)$. If $s>b$, since $F(s)$ is nonempty, there is $t\in F(s)$. Then there is $t'\in F(s)$ with $t'\ge t\vee w$. One obtains $b\le w\le t\vee w\le t'$ and hence $t'\in F^b(s)$.

	We prove that the correspondence $F^b$ is lower V-ascending. Indeed, for any $x<y$ in $[b,\1]$, every $s\in F^b(x)$ and every $t\in F^b(y)$, because $F$ is lower V-ascending, one has $s\wedge t\in F(x)$. Since $b\le s$ and $b\le t$, we have $b\le s\wedge t$ and $s\wedge t\in F^b(x)$.
	
	By Lemma \ref{lm:Zhou2}, the set $\Fix(F^b)=[b,\1]\cap \Fix(F)$ is nonempty and has a minimum element $m$. Then $\sup_{\Fix(F)}(U)=m$. 
	
	The proof is finished by \cite[Lemma 2.6]{yu2023generalization1}.
\end{proof}
\section{Calciano's approach}\label{sec:2pf}
In the second approach to Theorem \ref{thm:myZhou2}, we make use of ``truncation'' of the original correspondence by elements of $A_F$.  Compared with Veinott's approach given in Section \ref{sec:1pf}, a difference lies between the proofs of (\ref{eq:F^bnon}).

	Let	$X$ be a nonempty complete lattice. Let $F:X\to 2^X$ be a correspondence of nonempty values. Set $A'_F:=\{\sup_XS:S(\neq\emptyset)\subset \Fix(F)\}\cup\{\0\}$. 
\begin{thm}\label{thm:my13}

 Suppose that
	\begin{enumerate}[label=\emph{\roman*})]
		\item\label{it:Fchainbddabove} for every $x\in X$, the value $F(x)$ is chain-bounded above;	\item\label{it:Minc13} for any $x<x'$ in $X$, every maximal element $y\in F(x)$ and every maximal element $y'\in F(x')$, we have $y\le y'$;
		\item\label{it:lbinFh} for every $h\in A'_F$ and every $x\in [h,\1]$, the value $F^h(x)$ is chain-bounded below;
			\item \label{it:Lhinc}for every $h\in A'_F$, any $x<x'$ in $[h,\1]$, every minimal element $y\in F^h(x)$ and every minimal element $y'\in F^h(x')$, we have $y\le y'$.
	\end{enumerate}
	Then $\Fix(F)$ is a nonempty complete lattice. 
\end{thm}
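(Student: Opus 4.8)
The plan is to reduce the conclusion to three facts about $\Fix(F)$: that it is nonempty, that it has a least element, and that it is join-complete (every nonempty subset $U\subset\Fix(F)$ has a supremum computed inside $\Fix(F)$); by \cite[Lemma 2.6]{yu2023generalization1} these together already force $\Fix(F)$ to be a complete lattice. The only engine I would use is the ``resp.'' (least-fixed-point) form of Theorem \ref{thm:myCalciano}, applied to suitable truncations $F^h$ with the auxiliary correspondence $M(x):=\{\text{minimal elements of }F(x)\}$. The point of the set $A'_F$ is that it records exactly the values of $h$ at which hypotheses \ref{it:lbinFh} and \ref{it:Lhinc} are available, namely $h=\0$ and $h=\sup_X S$ for nonempty $S\subset\Fix(F)$, and these are precisely the truncations that arise below.

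First I would produce the least element. Since $\0=\min X$ gives $F^{\0}=F$, and $\0\in A'_F$, hypotheses \ref{it:lbinFh} and \ref{it:Lhinc} at $h=\0$ say that each $F(x)$ is chain-bounded below and that minimal elements of $F$ are isotone across strict inequalities. Taking $M(x)$ to be the minimal elements of $F(x)$, I would check the ``resp.'' form of assumptions \ref{it:MinF}--\ref{it:x*} of Theorem \ref{thm:myCalciano}: \ref{it:MinF} and \ref{it:Mmax} follow from Zorn's lemma and chain-boundedness below (the dual of the argument inside the proof of Lemma \ref{lm:Zhou2}), \ref{it:Minc} is exactly \ref{it:Lhinc} at $h=\0$, while \ref{it:Anonempty} and \ref{it:x*} hold because $\1\in B_F$ and $X$ is complete. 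This gives $\inf_X(B_F)\in\Fix(F)$ as the least fixed point, so in particular $\Fix(F)\neq\emptyset$.

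The heart of the matter is join-completeness, and this is where I expect the main obstacle. Fix a nonempty $U\subset\Fix(F)$ and set $b:=\sup_X U\in A'_F$. If $b\in\Fix(F)$ then $b=\sup_{\Fix(F)}U$ and there is nothing to do, so assume $b\notin\Fix(F)$; then every $u\in U$ satisfies $u<b$ (equality would put $b$ in $\Fix(F)$). I want to run the least-fixed-point machinery on the truncation $F^b$ over the complete lattice $[b,\1]$, and the delicate step is showing that $F^b$ has nonempty values, i.e. that $F(x)\cap[b,\1]\neq\emptyset$ for each $x\in[b,\1]$. This is the only place the \emph{maximal}-element hypotheses \ref{it:Fchainbddabove} and \ref{it:Minc13} are used. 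Given $x\ge b$, choose by Zorn's lemma and \ref{it:Fchainbddabove} a maximal element $m\in F(x)$; for each $u\in U$ choose a maximal $m_u\in F(u)$ with $m_u\ge u$. Since $u<b\le x$, condition \ref{it:Minc13} yields $m_u\le m$, hence $u\le m$ for every $u\in U$; taking the supremum over $U$ gives $m\ge b$, so $m\in F(x)\cap[b,\1]=F^b(x)$. Thus $F^b$ has nonempty values. (This replacement of Veinott's Lemma \ref{lm:VeinottinZhou} by a maximal-element chase is exactly what distinguishes this argument from the one in Section \ref{sec:1pf}.)

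Finally I would apply the ``resp.'' form of Theorem \ref{thm:myCalciano} to $F^b$ on $[b,\1]$, with $M$ the minimal elements of $F^b$: hypotheses \ref{it:lbinFh} and \ref{it:Lhinc} at $h=b\in A'_F$ supply chain-boundedness below and isotonicity of minimal elements, nonemptiness of the values was just established, and $\1\in B_{F^b}$ together with completeness of $[b,\1]$ give the rest. This produces a least element $m_0$ of $\Fix(F^b)=\Fix(F)\cap[b,\1]$. Since $\Fix(F)\cap[b,\1]$ is precisely the set of upper bounds of $U$ inside $\Fix(F)$, its least element $m_0$ equals $\sup_{\Fix(F)}U$. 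Hence $\Fix(F)$ is join-complete and has a least element, and \cite[Lemma 2.6]{yu2023generalization1} concludes that $\Fix(F)$ is a nonempty complete lattice.
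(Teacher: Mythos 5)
Your proof is correct and follows essentially the same route as the paper's: truncate at elements of $A'_F$, apply the least-fixed-point (``resp.'') form of Theorem \ref{thm:myCalciano} with the minimal-element correspondence as $M$, and finish with \cite[Lemma 2.6]{yu2023generalization1}. The only cosmetic difference is that you establish $F^b(x)\neq\emptyset$ by comparing maximal elements of $F(u)$ for $u\in U$ directly with a maximal element of $F(x)$, whereas the paper first proves $A'_F\subset A_F$ (Lemma \ref{lm:A'inA}) and then uses the resulting $A_F$-witness at $b$; both arguments hinge on condition \ref{it:Minc13} in the same way.
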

\begin{proof}
	For every $h\in A_F$ and every $x\ge h$ in $X$,	we prove  \begin{equation}\label{eq:F^hnon}
		F^h(x)\neq\emptyset.
	\end{equation} 
	Indeed, by  definition  of $A_F$, there is $y\in F(h)$ with $h\le y$. If $x=h$, then $y\in F^h(x)$. Now assume $x>h$. From the  proof of Lemma \ref{lm:A'inA}, the set $F(x)$ (resp. $F(h)\cap [y,\1]$) has a maximal element $z$  (resp. $g$).  By Condition \ref{it:Minc13}, one has
	$z\ge g$, so $z\in F^h(x)$. Thus, \eqref{eq:F^hnon}  is proved.
	
By Lemma \ref{lm:A'inA} and \eqref{eq:F^hnon},	for every $h\in A'_F$ and every $x\ge h$ in $X$, the set $F^h(x)$ is nonempty.
 Let $L_h(x)$ be the set of minimal elements of $F^h(x)$.
	For every $y\in F^h(x)$, the set $F^h(x)\cap[\0,y]$ contains $y$, so it is nonempty. By  Condition \ref{it:lbinFh}, every nonempty chain $C$ in $F^h(x)\cap[\0,y]$ has a lower bound $l$ in $F^h(x)$.  Then $l\le C\le y$. Hence, one has $l\in F^h(x)\cap[\0,y]$. By Zorn's lemma, $F^h(x)\cap[\0,y]$ has a minimal element $z$. Then \begin{equation}\label{eq:manyminimal}z\in L_h(x),\quad z\le y.\end{equation} 
	
	For every $h\in A'_F$, we check the conditions of Theorem \ref{thm:myCalciano} for the correspondences $F^h,L_h:[h,\1]\to2^{[h,\1]}$. Assumption \ref{it:MinF} follows from (\ref{eq:F^hnon}).  Assumption \ref{it:Mmax} follows from (\ref{eq:manyminimal}). Assumption \ref{it:Minc} is exactly Condition \ref{it:Lhinc}. Assumption \ref{it:Anonempty} results from $\1\in B_{F^h}$. By  subcompleteness of the sublattice $[h,\1]$ in $X$, the element $x_h:=\inf_{[h,\1]}B_{F^h}=\inf_XB_{F^h}$ exists. Thus, 
Assumption \ref{it:x*} holds.

By Theorem  \ref{thm:myCalciano}, for every $h\in A'$, one has  $x_h=\min\Fix(F^h)$.	In particular, $x_{\0}$ is the least element  of $\Fix(F^{\0})=\Fix(F)$. Therefore,  $\Fix(F)$ is nonempty.	For a nonempty subset $S$ of $\Fix(F)$, let $s:=\sup_XS$. The least element of $\Fix(F^s)=\Fix(F)\cap [s,\1]=\{x\in\Fix(F):x\ge S\}$ is $x_s$. Hence, one has $x_s=\sup_{\Fix(F)}S$. 
	
	The proof is completed by \cite[Lemma 2.6]{yu2023generalization1}.
\end{proof}\begin{lm}\label{lm:A'inA}Assume that 	$F(x)$ is chain-bounded above for every $x\in X$,  and that for any $x<x'$ in $X$, every maximal element $y\in F(x)$ and every maximal element $y'\in F(x')$, we have $y\le y'$.
Then $A'_F\subset A_F$.
\end{lm}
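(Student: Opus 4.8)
The plan is to verify that the two kinds of elements making up $A'_F$ both lie in $A_F$. The element $\0$ is immediate: since $\0=\min X$ we have $[\0,+\infty)=X$, and $F(\0)$ is nonempty, so $F(\0)\cap[\0,+\infty)=F(\0)\neq\emptyset$, whence $\0\in A_F$. The substance of the lemma is therefore to show $x:=\sup_X S\in A_F$ for an arbitrary nonempty $S\subset\Fix(F)$, i.e.\ to exhibit some $z\in F(x)$ with $z\ge x$.

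The key tool I would reuse is the Zorn's lemma argument already carried out in the proof of Lemma \ref{lm:Zhou2}: because $F(x)$ is nonempty and chain-bounded above, for every $y\in F(x)$ the set $\{z\in F(x):z\ge y\}$ is nonempty and every nonempty chain in it admits an upper bound inside it, so it contains a maximal element of $F(x)$ lying above $y$. In particular each value $F(\cdot)$ has maximal elements, and every element of a value sits below one. This is the existence fact that the proof of Theorem \ref{thm:my13} later cites as coming ``from the proof of Lemma \ref{lm:A'inA}''.

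With this in hand I would split on whether $x\in\Fix(F)$. If $x\in\Fix(F)$, then $x\in F(x)\cap[x,+\infty)$ and $x\in A_F$ directly. Otherwise no $s\in S\subset\Fix(F)$ can equal $x$, so $s<x$ strictly for every $s\in S$. I then fix a maximal element $z$ of $F(x)$, and for each $s\in S$ use $s\in F(s)$ together with the previous paragraph to obtain a maximal element $g_s\in F(s)$ with $g_s\ge s$. Applying the monotonicity hypothesis to the strict inequality $s<x$, the maximal element $g_s$ of $F(s)$, and the maximal element $z$ of $F(x)$ gives $g_s\le z$, hence $s\le z$; since this holds for all $s\in S$ we get $z\ge\sup_X S=x$, so $z\in F(x)\cap[x,+\infty)$ and $x\in A_F$.

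I expect the only real subtlety to be securing the \emph{strict} inequality $s<x$ needed to invoke the hypothesis, which is exactly why the case $x\in\Fix(F)$ must be peeled off first; once that is done, the passage from $s\in F(s)$ and from an arbitrary element of $F(x)$ to the respective maximal elements $g_s$ and $z$ is routine given chain-boundedness above.
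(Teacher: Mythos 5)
Your proof is correct and follows essentially the same route as the paper: the Zorn's lemma argument (chain-boundedness above gives a maximal element of $F(\cdot)$ above any given element) combined with the monotonicity of maximal elements to push $S\le z$ up to $\sup_X S\le z$. The only cosmetic difference is that the paper delegates the final step to Lemma \ref{lm:7} (phrased as a proof by contradiction, which is where the strict inequality $s<\sup_X S$ is secured), whereas you inline that argument and obtain strictness via the case split on whether $\sup_X S\in\Fix(F)$.
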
\begin{proof}
Since $F(\0)\neq\emptyset$, one has $\0\in A_F$. For every $x\in X$, let $M(x)$ be the set of maximal elements of $F(x)$. For every $y\in F(x)$, the set $F(x)\cap[y,\1]$ contains $y$, so  it is nonempty. By assumption, every nonempty chain $C$ in $F(x)\cap[y,\1]$ has an upper bound $s\in F(x)$. Since $s\ge C\ge y$, one has $s\in F(x)\cap[y,\1]$. By Zorn's lemma, $F(x)\cap[y,\1]$ has maximal element $g$. Then $g\in M(x)$ and $y\le g$. In particular, $M(x)$ is nonempty for every $x\in X$, and $A_F=A_M$. By  completeness of $X$, for every nonempty subset $S$ of $\Fix(F)$,  the supremum $\sup_XS$ exists. By assumption, Lemma \ref{lm:7} and $S\subset A_F=A_M$, one has $\sup_XS\in A_F$.
\end{proof}
\begin{proof}[Second proof of Theorem \ref{thm:myZhou2}]
	We check the conditions of Theorem \ref{thm:my13}. Condition \ref{it:Fchainbddabove} is Assumption \ref{it:Fischainbddabove}.
	
By Assumption \ref{it:valuehalfsubcpltZhou},	for any $h\le x$ in $X$ and every nonempty chain $C$ in $F^h(x)$,  one has $\inf_X(C)\in F(x)$. Further, $h\le C$ implies $h\le \inf_X(C)$. Thus, $\inf_X(C)\in F^h(x)$ is a lower bound on $C$. In particular, $F^h(x)$ is chain-bounded below, i.e., Condition \ref{it:lbinFh} is verified.
	
Because $F$ is upper C-ascending, 	for any $x<x'$ in $X$, every maximal element $y\in F(x)$ and every maximal element $y'\in F(x')$, there exists $u\in F(x')$ with $y\vee y'\le u$. As $y'$ is maximal in $F(x')$, we have $y'=u\ge y$. Condition \ref{it:Minc13} is proved.
	
Because $F$ is lower V-ascending,	for any $h\le x<x'$ in $X$, every minimal element $y\in F^h(x)$ and every minimal element $y'\in F^h(x')$,  one gets $y\wedge y'\in F(x)$. Since $h\le y$ and $h\le y'$, we have $h\le y\wedge y'$ and hence $y\wedge y'\in F^h(x)$. As $y$ is minimal in $F^h(x)$, we have $y=y\wedge y'\le y'$. Condition \ref{it:Lhinc} is obtained.
	
	Now, the conclusion follows from Theorem \ref{thm:my13}.
\end{proof}
As applications of Theorem \ref{thm:my13}, we recover Calciano's and Sabarwal's fixed point theorems.
\begin{proof}[Proof of Theorem \ref{thm:CalZhou}]
	We check the conditions of Theorem \ref{thm:my13}.	For every $x\in X$, every nonempty chain in $F(x)$ has an upper bound, namely $\max F(x)$ in Assumption \ref{it:Fhasmax}. Thus, Condition \ref{it:Fchainbddabove} holds.
	
By Assumption \ref{it:Fuppinc} and \cite[Theorem 11]{calciano2010theory}, for any $x<x'$ in $X$, every maximal element $y\in F(x)$ and every maximal element $y'\in F(x')$,  we have $y\le y'$. Condition \ref{it:Minc13} is verified. Then by Lemma \ref{lm:A'inA},  every $h\in A'_F$ is in $A_F$. Whence, by Assumption \ref{it:Fhmin}, $\min F^h(x)$ exists  for every $x\in [h,\1]$. It is a  lower bound on every nonempty chain in $F^h(x)$. Thereby, Condition \ref{it:lbinFh} holds.

By Assumption \ref{it:Fstr}, for any $x<x'$ in $[h,\1]$, every minimal element $y\in F^h(x)$ and every minimal element $y'\in F^h(x')$, there exists $p\in F(x)$ with $y\wedge y'\le p\le y'$.  Since $h\le y\wedge y'$, one has $p\in F^h(x)$. From Assumption \ref{it:Fhmin}, the least element of $F^h(x)$ is $y$, so $y\le p\le y'$. Condition \ref{it:Lhinc} follows. We conclude by Theorem \ref{thm:my13}.
\end{proof}

\begin{proof}[Proof of Theorem \ref{thm:Sabarwal}]
	We check the conditions of Theorem \ref{thm:my13}. By Assumption \ref{it:isotonesup}, Conditions \ref{it:Fchainbddabove} and \ref{it:Minc13} hold. Then by Lemma \ref{lm:A'inA} and Assumption \ref{it:isoinfupint},   Conditions \ref{it:lbinFh} and \ref{it:Lhinc}  are true. 
\end{proof}

\begin{proof}[Proof of Theorem \ref{thm:chaincpltvalue}]
	It suffices to check the conditions of Theorem \ref{thm:my13}. 
	
	For every $h\in A'_F$ and every $x\in[h,\1]$, we prove that every nonempty chain $C$ has a lower bound in $F^h(x)$. Assume the contrary. Because $F(x)$ is chain-complete downwards, one has $h\neq\0$. By definition of $A'_F$, there is a nonempty subset $S$ of $\Fix(F)$ with $h=\sup_XS$. For every $s\in S$, one has $s\le h\le x$. If $s=x$, then $x=h=s\in\Fix(F)$ and $h\in F^h(x)$ is a lower bound on $C$, which is a contradiction. Now assume $s<x$.  As $F(x)$ is chain-complete downwards, $m:=\inf_{F(x)}C$ exists. Since $F$ is upper V-ascending,  $s\vee m$ is in $F(x)$. From $C\ge m$ and  $C\ge h\ge s$, one obtains $C\ge s\vee m$. It implies $m\ge s\vee m$. Thus, one has $S\le m$ and hence $h\le m$. However, $m\in F^h(x)$ is a lower bound on $C$, which is a contradiction. Thus, Condition \ref{it:lbinFh} is verified.

Because $F$ is lower V-ascending, 	for every $h\in X$, any $x<x'$ in $[h,\1]$, any minimal elements $y\in F^h(x)$ and  $y'\in F^h(x')$, one has $y\wedge y'\in F(x)$ and $h\le y\wedge y'$. Then $y\wedge y'\in F^h(x)$. Because $y$ is minimal, $y=y\wedge y'\le y'$. Condition  \ref{it:Lhinc} is proved.
	
	We check Condition \ref{it:Minc13}.  Because $F$ is upper V-ascending, 	  for any $x<x'$ in $X$, any maximal elements $y\in F(x)$ and $y'\in F(x')$, one has $y\vee y'\in F(x')$. Because $y'\le y\vee y'$ and $y'$ is maximal, one has $y\le y\vee y'=y'$.
	\end{proof}
\begin{eg}\label{eg:Zhou>wkZhou}
	Let $X$ be $[-1,1]\cup\{a,b\}\cup \{2\}$, where $1\le a,b\le 2$ and $a,b$ are incomparable. Then $X$ is complete lattice. Define a correspondence \[F:X\to 2^X,\quad x\mapsto\begin{cases}
		\{-1\}\cup(-0.5,0.5)\cup\{1\}, & x\le 1,\\
		\{a,b\}, & x=a,\\
		\{a\}, & x=b,\\
		\{2\}, & x=2.
	\end{cases}\] The correspondence $F$ is  not ascending. Still,  it satisfies the hypotheses of Theorem \ref{thm:chaincpltvalue}. The value $F(0)$ is neither chain-subcomplete upwards nor chain-subcomplete downwards in $X$. The value $F(a)$ is not a lattice. The set $\Fix(F)=\{-1\}\cup(-0.5,0.5)\cup\{1,a,2\} $ is indeed a complete lattice. 
\end{eg}
\section{Multivalued fixed point theorems for posets}\label{sec:ABM}
If $F$ is a single-valued map, then Theorem \ref{thm:AbianBrown} (resp. \ref{thm:Markowski}) specializes to the Abian-Brown theorem (resp. the Markowsky theorem), i.e., Fact \ref{ft:AbianBrown} (resp.  \ref{ft:Markowsky}).
\begin{thm}\label{thm:AbianBrown}
	Let $X$ be a nonempty poset where  every nonempty well-ordered subset  has a least upper bound. Let $F:X\to 2^X$ be a correspondence. Assume that \begin{enumerate}\item\label{it:FincAB} for every nonempty well-ordered subset  $C$ of $X$, every $x\in X$ with $x>C$
	and every selection $f:C\to X$ of $F$, there is $y\in F(x)$ with $y\ge f(C)$; \item the set $A_F$ is nonempty.\end{enumerate}Then $F$ has a fixed point.  
\end{thm}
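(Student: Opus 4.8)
The plan is to adapt the transfinite iteration underlying the single-valued Abian--Brown theorem (Fact \ref{ft:AbianBrown}) to the multivalued setting: I build a strictly increasing well-ordered chain $(x_\alpha)$ by transfinite recursion and read off a fixed point once the chain can no longer grow. Fix once and for all a choice function on the nonempty subsets of $X$, so that the recursion below is unambiguous. The invariant I would maintain is that each $x_\alpha$ lies in $A_F$, i.e.\ $F(x_\alpha)$ contains a point $\ge x_\alpha$; as soon as that point can be taken to be $x_\alpha$ itself, the recursion halts with $x_\alpha\in\Fix(F)$.

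For the base step put $x_0=a$ for some $a\in A_F$, which exists by hypothesis (2). At a successor stage $\gamma+1$, given $x_\gamma\in A_F$: if $x_\gamma\in F(x_\gamma)$ we stop; otherwise choose $x_{\gamma+1}\in F(x_\gamma)$ with $x_{\gamma+1}\ge x_\gamma$, and since $x_\gamma\notin F(x_\gamma)$ this forces $x_{\gamma+1}>x_\gamma$. To see that the invariant persists, apply hypothesis (1) to the singleton chain $C=\{x_\gamma\}$, the point $x=x_{\gamma+1}>C$, and the selection $x_\gamma\mapsto x_{\gamma+1}\in F(x_\gamma)$: this yields $y\in F(x_{\gamma+1})$ with $y\ge x_{\gamma+1}$, so $x_{\gamma+1}\in A_F$.

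The crux is the limit stage $\lambda$. I would set $x_\lambda:=\sup_X\{x_\beta:\beta<\lambda\}$, which exists because $\beta\mapsto x_\beta$ is strictly increasing, so its image $C:=\{x_\beta:\beta<\lambda\}$ is a well-ordered subset of $X$ and the hypothesis on $X$ applies. Each $x_\beta<x_{\beta+1}\le x_\lambda$ (with $\beta+1<\lambda$), whence $x_\lambda>C$. Now comes the key choice of selection: define $f(x_\beta):=x_{\beta+1}$, which is a genuine selection of $F$ on $C$ since $x_{\beta+1}\in F(x_\beta)$ for every $\beta<\lambda$. Hypothesis (1) then produces $y\in F(x_\lambda)$ with $y\ge x_{\beta+1}$ for all $\beta<\lambda$; as $\{x_{\beta+1}:\beta<\lambda\}$ and $\{x_\beta:\beta<\lambda\}$ are mutually cofinal they share the supremum $x_\lambda$, so $y\ge x_\lambda$ and again $x_\lambda\in A_F$, preserving the invariant.

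Finally, termination. As long as no fixed point is encountered the chain is strictly increasing, so $\alpha\mapsto x_\alpha$ is a strictly order-preserving, hence injective, map of ordinals into the \emph{set} $X$; were the recursion to proceed through all ordinals it would inject (for instance) the Hartogs number of $X$ into $X$, which is impossible. Hence the recursion halts, and by construction it can only halt at some $x_\alpha\in F(x_\alpha)$, a fixed point. I expect the limit stage---specifically the shift-by-one selection together with the cofinality identity $\sup_\beta x_{\beta+1}=\sup_\beta x_\beta$---to be the only genuinely delicate point; the successor and termination steps are routine. As a consistency check, specializing to a single-valued increasing $f$ (where the only selection on $C$ is $f|_C$ and $x>C$ gives $f(x)\ge f(C)$ by monotonicity) recovers Fact \ref{ft:AbianBrown}.
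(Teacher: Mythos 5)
Your proof is correct and follows essentially the same route as the paper: both arguments hinge on using hypothesis (1) to propagate membership in $A_F$ at successor steps (via a singleton chain) and at limit steps (via a selection along the well-ordered chain built so far), the latter being exactly the paper's lemma that $\sup_X C\in A_F$ for every nonempty well-ordered $C\subset A_F$. The only difference is one of packaging: the paper extracts an expansive selection $g:A_F\to A_F$ and cites Moroianu's fixed point theorem to conclude, whereas you inline that final step as an explicit transfinite recursion terminated by a Hartogs-number argument.
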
\begin{proof}
 For every nonempty well-ordered subset $C$ of $A_F$, we prove that $x:=\sup_XC$ is in $A_F$.  Assume to the contrary that $x\notin A_F$. Then $x>C$. For every $c\in C$, there is $f(c)\in F(c)$ with $c\le f(c)$. Then $f:C\to X$ is a selection of $F$. By Assumption \ref{it:FincAB}, there is $y\in F(x)$ with $y\ge f(C)$. Then $y\ge C$, so $y\ge x$. Thus, one has $x\in A_F$, which is a contradiction.

Assume to the contrary that $F$ has no fixed point. For every $x\in A_F$, there is $g(x)\in F(x)$ with $x\le g(x)$. Since $x\notin \Fix(F)$, one has $x<g(x)$. By Assumption \ref{it:FincAB}, there is $z\in F(g(x))$ with $z\ge g(x)$. Hence, one has $g(x)\in A_F$. By Moroianu's fixed point theorem \cite[Theorem]{moroianu1982theorem}, since every nonempty well-ordered subset of $A_F$ has an upper bound in $A_F$,  the  map $g:A_F\to A_F$ has a fixed point. As $g$ is a selection of $F$, it is a contradiction. 
\end{proof}

\begin{thm}\label{thm:Markowski}
Let $X$ be a nonempty  poset. Let $F:X\to 2^X$ be a correspondence. Assume that\begin{enumerate}[label=\alph*)]
	\item\label{it:Markowskichaincplt} $X$ is chain-complete;
\item\label{it:stronginc} for every chain $U$ in $X$, every subset $V$ of $X$, any selections $f:U\to X$ and $g:V\to X$ of $F$ with $f(U)\le g(V)$, and every $x\in X$ with $U< x< V$, there is $y\in F(x)$ with $f(U)\le y\le g(V)$. 
\end{enumerate}  Then
\begin{enumerate}
\item\label{it:multivalleast} there is a least element  of $\Fix(F)$;
\item one has $\min \Fix(F)\le B_F$;
\item $\Fix(F)$ is chain-complete.
\end{enumerate}
\end{thm}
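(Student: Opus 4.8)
The plan is to construct, by transfinite recursion, a single fixed point $x^{*}$ lying below every element of $B_{F}$; this settles the first two assertions at once, after which a truncation argument handles the third. First note that taking $U=V=\emptyset$ in hypothesis \ref{it:stronginc} forces $F(x)\neq\emptyset$ for every $x$. By \ref{it:Markowskichaincplt}, $X$ has a least element $\0$, and $\0\le y$ for any $y\in F(\0)$. Since $\0=\min X$, one has $\0\in B_{F}$ iff $\0\in\Fix(F)$; in that case $\0=\min\Fix(F)$ and $\0\le B_{F}$ trivially, so assume $\0\notin\Fix(F)$, i.e. $\0\notin B_{F}$. Fix a selection $g\colon B_{F}\to X$ of $F$ with $g(b)\le b$ for every $b\in B_{F}$.

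I will build a strictly increasing ordinal-indexed family $(x_{\alpha})$ in $X$ obeying the single invariant
\[
x_{\alpha}\le g(b)\qquad\text{for every }b\in B_{F}.
\]
The decisive observation is that this invariant makes $B_{F}$ a trap for fixed points only: if $x_{\alpha}\in B_{F}$, then taking $b=x_{\alpha}$ gives $x_{\alpha}\le g(x_{\alpha})\le x_{\alpha}$, so $x_{\alpha}=g(x_{\alpha})\in F(x_{\alpha})$ is a fixed point. Hence, until a fixed point is reached, $x_{\alpha}\notin B_{F}$, and since the invariant gives $x_{\alpha}\le B_{F}$ this yields the strict relation $x_{\alpha}<B_{F}$ needed to invoke \ref{it:stronginc} with $V=B_{F}$.

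Set $x_{0}=\0$. To pass from $x_{\delta}$ (not yet a fixed point) to $x_{\delta+1}$, apply \ref{it:stronginc} at the point $x_{\delta}$ with $U=\{x_{\gamma}:\gamma<\delta\}$ and selection $x_{\gamma}\mapsto x_{\gamma+1}$, with $V=B_{F}$ and selection $g$: the premise $f(U)\le g(V)$ is the invariant, while $U<x_{\delta}<V$ holds by strict monotonicity and $x_{\delta}\notin B_{F}$. This produces $x_{\delta+1}\in F(x_{\delta})$ with $x_{\gamma+1}\le x_{\delta+1}\le g(B_{F})$; the lower bounds give $x_{\delta+1}\ge x_{\delta}$ (at a successor $\delta$ because $x_{\delta}$ is itself some $x_{\gamma+1}$, at a limit $\delta$ because $x_{\delta}=\sup_{\gamma<\delta}x_{\gamma+1}$), so the invariant persists. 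At limit stages put $x_{\lambda}=\sup_{X}\{x_{\gamma}:\gamma<\lambda\}$, which exists by \ref{it:Markowskichaincplt} and inherits the invariant. If ever $x_{\delta+1}=x_{\delta}$ or $x_{\delta}\in B_{F}$ we have a fixed point; otherwise the family is strictly increasing, which by Hartogs' theorem cannot persist past $|X|^{+}$. Thus the recursion halts at a fixed point $x^{*}$ with $x^{*}\le g(b)\le b$ for all $b\in B_{F}$. Because every fixed point $p$ lies in $B_{F}$ (as $p\in F(p)$ and $p\le p$), we get $x^{*}\le\Fix(F)$, so $x^{*}=\min\Fix(F)$ and $\min\Fix(F)\le B_{F}$, proving the first two assertions. (Plain existence already follows from Theorem \ref{thm:AbianBrown}, since \ref{it:stronginc} with $V=\emptyset$ implies its hypothesis; the construction is needed only to single out the least fixed point and the bound.)

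For the third assertion I must supply, for each nonempty chain $K\subseteq\Fix(F)$, a supremum inside $\Fix(F)$. Let $s=\sup_{X}K$, which exists by \ref{it:Markowskichaincplt}; if $s\in K$ then $s=\max K$ is already that supremum, so assume $K<s$. Pass to the truncation $F^{s}\colon[s,\infty)\to 2^{[s,\infty)}$, recalling $\Fix(F^{s})=\Fix(F)\cap[s,\infty)$ and that $[s,\infty)$ is chain-complete with least element $s$. One checks that $F^{s}$ satisfies \ref{it:stronginc} on $[s,\infty)$: given admissible data for $F^{s}$, \ref{it:stronginc} for $F$ returns $y\in F(x)$ with $f(U)\le y\le g(V)$, and $y\ge s$ follows automatically when $U\neq\emptyset$, while for $U=\emptyset$ one first adjoins the chain $K$ (selection $k\mapsto k$, legitimate since $k\in F(k)$, $K<x$, and $k\le s\le g(V)$) to force $y\ge\sup K=s$; hence $y\in F^{s}(x)$. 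Applying the first two assertions to $F^{s}$ yields its least fixed point $m=\min(\Fix(F)\cap[s,\infty))$; as $m\ge s\ge K$ and any fixed point $p\ge K$ satisfies $p\ge s$, hence $p\ge m$, we obtain $m=\sup_{\Fix(F)}K$. With the least element from the first assertion, $\Fix(F)$ is chain-complete. The crux of the whole argument is the construction behind the first two assertions, and specifically the invariant $x_{\alpha}\le g(B_{F})$: it is what simultaneously furnishes the hypothesis $f(U)\le g(V)$ of \ref{it:stronginc} and guarantees that the only route into $B_{F}$ is onto a fixed point, thereby turning the required strict inequality $x_{\alpha}<B_{F}$ into a legitimate, terminating recursion whose limit is a lower bound of $B_{F}$.
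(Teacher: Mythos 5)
Your proof is correct, and while it runs on the same engine as the paper's --- iterate upward from $\0$ using hypothesis \ref{it:stronginc} with the past iterates as $U$ and a fixed ``target'' set as $V$, then truncate to $[s,\infty)$ for chain-completeness --- it reorganizes the first two assertions in a genuinely different and more economical way. The paper proves assertion \ref{it:multivalleast} by taking $V=\Fix(F)$ with the identity selection (via the auxiliary set $S=\{x:\exists y\in F(x),\ x\le y\le \Fix(F)\}$ and the Bourbaki--Witt theorem), and then proves $\min\Fix(F)\le B_F$ by a second, separate truncation $F':[\0,v]\to 2^{[\0,w]}$ for each $v\in B_F$. You instead take $V=B_F$ with a chosen selection $g(b)\le b$, and your single invariant $x_\alpha\le g(B_F)$ does double duty: it supplies the premise $f(U)\le g(V)$ of \ref{it:stronginc}, and it turns membership in $B_F$ into a fixed-point trap ($x_\alpha\le g(x_\alpha)\le x_\alpha$), so the recursion terminates (by Hartogs) at a fixed point that is automatically a lower bound of $B_F\supset\Fix(F)$. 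This yields assertions 1 and 2 in one stroke and replaces the Bourbaki--Witt black box by an explicit transfinite recursion; the price is carrying the well-ordering bookkeeping by hand, which you do correctly (the selection $x_\gamma\mapsto x_{\gamma+1}$ is well defined by strict monotonicity, and $x_{\delta+1}\ge x_\delta$ is checked separately at successor and limit stages). Your treatment of assertion 3 --- truncating to $F^s$, splitting on whether $U$ is empty and, if so, adjoining the chain $K$ with the identity selection to force $y\ge s$ --- coincides with the paper's. One small point worth making explicit: when $B_F=\emptyset$ the invariant and the condition $x_\delta<V$ are vacuous, the recursion still terminates at a fixed point, and that fixed point then witnesses $B_F\neq\emptyset$ a posteriori, so no case is lost.
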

\begin{proof}\begin{enumerate}
		\item From Assumption \ref{it:Markowskichaincplt}, there is a least element $\0$ of $X$. If $\0\in\Fix(F)$, then $\min\Fix(F)=\0$. Now assume $\0\notin\Fix(F)$, then $0<\Fix(F)$. By Assumption \ref{it:stronginc},  there is $y_0\in F(\0)$ with $y_0\le \Fix(F)$. Let $S$ be the subset of $x\in X$ for which there is $y\in F(x)$ with $x\le y\le \Fix(F)$. Then $S$ is nonempty as $\0\in S$.
		
	For every nonempty chain $C$ in $S$, we prove that $x:=\sup_XC$ is in $S$. Assume the contrary $x\notin S$. Since $S\le\Fix(F)$, one has $x\le \Fix(F)$. From $x\notin S$, one has $x\notin\Fix(F)$ and hence $C<x<\Fix(F)$. For every $c\in C$, there is $f(c)\in F(c)$ with $c\le f(c)\le \Fix(F)$. Then $f:C\to X$ is a selection of $F$. By Assumption \ref{it:stronginc}, there is $y\in F(x)$ with $f(C)\le y\le \Fix(F)$. Then $C\le y$. One has $x\le y$ and hence $x\in S$, which is a contradiction. 

 Assume the contrary that $\Fix(F)$ has no least element. For every $x\in S$, there is $g(x)\in F(x)$ with $x\le g(x)\le \Fix(F)$. Every element of $S\cap\Fix(F)$ is a least element of $\Fix(F)$, so $x\notin \Fix(F)$ and hence $x<g(x)$. Since $g(x)$ is not the least element of $\Fix(F)$, one has $g(x)<\Fix(F)$. Then by Assumption \ref{it:stronginc},  there is $z\in F(g(x))$ with $g(x)\le z\le \Fix(F)$. Then $g(x)\in S$. By the Bourbaki–Witt fixed point theorem (see, e.g., \cite[Thm.~2.1, p.881]{Lang2002algebra}), the  map $g:S\to S$ has a fixed point $x_0\in S$. Since $g$ is a selection of $F$, one has $x_0\in S\cap \Fix(F)$ and hence $x_0=\min \Fix(F)$, which is a contradiction. 
\item For every $v\in B_F$, there is $w\in F(v)$ with $w\le v$. Define a correspondence $F':[\0,v]\to 2^{[\0,w]},\quad x\mapsto F(x)\cap [\0,w]$. For every chain $U\subset [\0,v]$, every subset $V\subset [\0,v]$, any selections $f:U\to [\0,v]$ and $g:V\to [\0,v]$ of $F'$ with $f(U)\le g(V)$, and every $x\in [\0,v]$ with $U<x<V$, we shall find $y\in F'(x)$ with $f(U)\le y\le g(V)$. We divide the situation to three cases.\begin{enumerate}[label=\roman*]
\item\label{it:VnonemptyZhou}  If $V$ is nonempty, then  by Assumption \ref{it:stronginc}, there is $y\in F(x)$ with $f(U)\le y\le g(V)$. Since $g(V)$  is a nonempty subset of $[\0,w]$, one has $y\le w$ and hence $y\in F'(x)$. 
\item If $V$ is empty and $x<v$, then by setting $V=\{v\}$ and $g(v)=w$ we are reduced to Case \ref{it:VnonemptyZhou}.
\item If $x=v$, then take $y=w$.
\end{enumerate}

 The interval $[\0,v]$ is  chain-complete, and $F':[\0,v]\to 2^{[\0,v]}$ satisfies Assumption \ref{it:stronginc}, so by Part \ref{it:multivalleast}, there is $v'\in\Fix(F')$. Since $\Fix(F')\subset\Fix(F)$, one has $\min \Fix(F)\le v'\le v$. 
\item By Part \ref{it:multivalleast}, it remains to prove that for every nonempty chain $C$ in $\Fix(F)$, the element $\sup_{\Fix(F)}C$ exists. Let $u:=\sup_XC$. If $u\in C$, then $\sup_{\Fix(F)}C$ is $u$. Now assume $u\notin C$. One has $C<u$. The poset $[u,+\infty)$	is   chain-complete.  For every chain $U$ in $[u,+\infty)$, every subset $V$ of $[u,+\infty)$, any selections $f:U\to [u,+\infty)$ and $g:V\to [u,+\infty)$ of $F^u$ with $f(U)\le g(V)$, and every $x\in [u,+\infty)$ with $U< x< V$, we shall find $y\in F^u(x)$ with $f(U)\le y\le g(V)$. There are two cases.\begin{itemize}
\item If $U$ is nonempty, then  by Assumption \ref{it:stronginc}, there is $y\in F(x)$ with $f(U)\le y\le g(V)$. Since $f(U)$ is a nonempty subset of $[u,+\infty)$, one has $u\le y$ and hence $y\in F^u(x)$.
\item If $U$ is empty, then by $C<u\le x<V$ and Assumption \ref{it:stronginc}, there is $y\in F(x)$ with $C\le y\le g(V)$. One has $u\le y$ and hence $y\in F^u(x)$.
\end{itemize}

Now by Part \ref{it:multivalleast}, there is a least element $u'$ of $\Fix(F^u)$. Since $\Fix(F^u)=\Fix(F)\cap [u,+\infty)=\{x\in\Fix(F):x\ge C\}$, one has $u'=\sup_{\Fix(F)}C$.\end{enumerate}
\end{proof}\begin{rk}
Let $X$ be a lattice. Let $F:X\to 2^X$ be a correspondence. 
\begin{enumerate}\item If $F$ is V-ascending  and of nonempty subcomplete values, then  for any subsets $U$ and $V$  of $X$,  any selections $f:U\to X$ and $g:V\to X$ of $F$ with $f(U)\le g(V)$, and every $x\in X$ with $U< x< V$, there is $y\in F(x)$ with $f(U)\le y\le g(V)$. In particular, $F$ 	satisfies Assumption \ref{it:stronginc} in Theorem \ref{thm:Markowski}. In fact, since $f(x)$ is nonempty, there is $z\in f(x)$. By ascendingness, for every $u\in U$, one has $f(u)\vee z\in f(x)$. Since $f(x)\subset X$ is subcomplete, $z':=\sup_X\{f(u)\vee z|u\in U\}$ is in $f(x)$.  Replacing $z$ by $z'$, one may assume $z\ge f(U)$. By ascendingness, for every $v\in V$, one has $g(v)\wedge z\in f(x)$. Since $f(x)\subset X$ is subcomplete, $y:=\inf_X\{g(v)\wedge z|v\in V\}$ is in $f(x)$. For every $v\in V$, one has $f(U)\le g(v)$ and $f(U)\le z$, so $f(U)\le g(v)\wedge z\le g(v)$. Thus, $f(U)\le y\le g(V)$. 
	\item Since every well-ordered subset is a chain, Assumption \ref{it:stronginc} in Theorem \ref{thm:Markowski} is stronger than Assumption \ref{it:FincAB} in Theorem \ref{thm:AbianBrown}.\end{enumerate}
\end{rk}
\section{Application to partially quasi-supermodular games}\label{sec:game}
We provide an application of Theorem \ref{thm:myZhou2} to partially quasi-supermodular games (Definition \ref{df:parqG}), which are extensions of quasi-supermodular games in \cite[p.175]{milgrom1994monotone} and \cite[p.179]{topkis1998supermodularity}.

\begin{df}
Let $X$ be a lattice, $Y$ be a chain and $f: X \to Y$ be a map. We say that $f$ is \emph{partially quasi-supermodular} if it satisfies the following properties.  For any $x,x' \in X$, if $f(x') \ge f(x \vee x')$ (resp. $f(x') > f(x \vee x')$), then there is $u \in X$, such that $u\le x\wedge x'$ and $f(u) \ge f(x)$ (resp. $f(u) > f(x)$). 
\end{df}
A quasi-supermodular function in \cite[p.162]{milgrom1994monotone} is partially quasi-supermodular. 
\begin{df}
Let $X$ be a lattice, $Y$ be a chain and $f: X \to Y$ be a map. If for any $x,x'\in X$, either $f(x\wedge x')\ge f(x)\wedge f(x')$ or $f(x\vee x')\ge f(x)$, then $f$ is called join-superextremal.
\end{df}
 As shown in Example \ref{eg:game}, a partially quasi-supermodular and join-superextremal function is not necessarily quasi-supermodular.

\begin{df}[{\cite[p.1260]{milgrom1990rationalizability}}, {\cite[Definition 1]{prokopovych2017strategic}}]\label{df:halforderusc}
		Given a complete lattice $X$, a function $f:X\to \R$ is upward upper semicontinuous if for every nonempty chain $C$ in $X$,  one has \[\limsup_{x\in C,x\to\sup_X C}f(x)\le f(\sup_X C).\]
		
\end{df}

We introduce our extension of quasi-supermodular games.
\begin{df}\label{df:parqG}
    A partially quasi-supermodular game $(N,\{S_i\}_i,\{f_i\}_i)$ consists of the following data:

    \begin{itemize}
        \item a nonempty set of players $N$;
        \item for every $i\in N$, a nonempty lattice $S_i$ of the strategies of player $i$;
        \item for every $i\in N$, a payoff function $f_i: S:=\prod_i S_i \to \R$
    \end{itemize}
    such that for every $i \in N$,
    \begin{enumerate}
        \item for every $x_{-i} \in S_{-i}:= \prod_{j\neq i} S_j$, the function $f_i(\cdot, x_{-i}): S_i \to \R$ is partially quasi-supermodular and 
        \item the function $f_i: S_i \times S_{-i} \to \R$ has the single crossing property in the sense of \cite[p.160]{milgrom1994monotone}.
      
    \end{enumerate}
\end{df}
Fix a partially quasi-supermodular game.

\begin{df}
		A strategy $x\in S$ is called a \emph{Nash equilibrium} of the fixed partially quasi-supermodular game, if for every $i\in N$ and every $x_i'\in S_i$, one has $f_i(x'_i,x_{-i})\le f_i(x)$.
\end{df}
  For every $i\in N$,  define a correspondence $ R_i: S \to 2^{S_i},\, x \mapsto \argmax_{y_i\in S_i} f_i(y_i,x_{-i})$. We define  $R: S\to 2^S,\quad x \mapsto \prod_{j\in N} R_j(x)$. Then $\Fix(R)$ is exactly the set of Nash equilibria. 
  
 The topology on a poset $X$ generated by the subbasis for closed subsets  comprised of closed intervals $(-\infty,x]$ and $[x,+\infty)$ (where $x$ runs over $X$)  is called the \emph{interval topology} of $X$.

\begin{thm}\label{thm:game}
     Suppose that the $S_i$ are complete.  
    Assume that for every $i\in N$ and every $x \in S$,  
    \begin{enumerate}

    \item \label{ass:b} for every nonempty chain $C \subset S_i$, there is $b \in S_i$ such that $b\le C$ and $f_i(b,x_{-i}) \ge \liminf_{c \in C, c\to \inf_{S_i} C} f_i(c,x_{-i})$;
    \item \label{ass:continuous} the function $f_i(\cdot, x_{-i}): S_i \to \R$ is either join superextremal and upward upper semi-continuous, or upper semi-continuous with respect to the interval topology of $S_i$.
    \end{enumerate}

    Then the set of Nash equilibria is a nonempty complete lattice. 
\end{thm}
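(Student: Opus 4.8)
The plan is to apply the parenthetical version of Theorem \ref{thm:myZhou2} to the joint best-reply correspondence $R:S\to 2^S$, $x\mapsto\prod_{i\in N}R_i(x)$, on the complete lattice $S=\prod_{i\in N}S_i$ (a product of complete lattices, with componentwise order, joins and meets). Since $\Fix(R)$ is exactly the set of Nash equilibria, it suffices to verify that $R$ satisfies hypotheses a), b), c) of that theorem in the form: each value is nonempty and chain-subcomplete upwards, each value is chain-bounded below, and $R$ is upper V-ascending and lower C-ascending. Because order, joins and meets in $S$ are computed coordinatewise and $R(x)=\prod_i R_i(x)$, each of these properties reduces to the corresponding property of the single-player values $R_i(x)=\argmax_{y_i\in S_i}f_i(y_i,x_{-i})$, which depend only on $x_{-i}$; note the projection of a chain is a chain, so a coordinatewise supremum of maximizers is again a profile of maximizers. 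Chain-boundedness below of $R(x)$ is automatic from completeness of $S$.

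The heart of the argument is condition c), which I would obtain as a comparative-statics statement generalizing the Milgrom--Shannon monotone selection theorem to partially quasi-supermodular payoffs. Fix $i$ and abbreviate $g=f_i(\cdot,t)$, $g'=f_i(\cdot,t')$ for opponent profiles $t\le t'$ in $S_{-i}$. First, partial quasi-supermodularity alone makes each $\argmax g$ join-closed: if $a,b\in\argmax g$ but $g(a\vee b)<g(b)$, then the strict form of partial quasi-supermodularity applied to the pair $a,b$ produces $u\le a\wedge b$ with $g(u)>g(a)$, contradicting $a\in\argmax g$. Next, for $a\in\argmax g$ and $b\in\argmax g'$ with $t<t'$, if $a\vee b\notin\argmax g'$ then partial quasi-supermodularity of $g'$ yields $u\le a\wedge b\le a$ with $g'(u)>g'(a)$, and the single crossing property (its strict contrapositive, applied to $u\le a$ and $t\le t'$) forces $g(u)>g(a)$, again a contradiction; hence $a\vee b\in\argmax g'$, so $R$ is upper V-ascending. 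Dually, applying partial quasi-supermodularity of $g'$ to the weak inequality $g'(b)\ge g'(a\vee b)$ gives $u\le a\wedge b$ with $g'(u)\ge g'(a)$, and single crossing upgrades this to $g(u)\ge g(a)$, so $u\in\argmax g$ with $u\le a\wedge b$; this is lower C-ascending. In the product, $x<x'$ gives $x_{-i}<x'_{-i}$ for every coordinate $i$, with the sole exception that if $x$ and $x'$ differ in exactly one coordinate $i$ then $x_{-i}=x'_{-i}$; for such a coordinate the single-crossing step is vacuous and join-closedness (resp.\ the meet-element furnished by partial quasi-supermodularity) supplies exactly what is required.

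It remains to treat hypothesis a): nonemptiness and chain-subcompleteness upwards of each $R_i(x)=\argmax g$. Chain-subcompleteness upwards is the easier half. In the interval-topology case, $\argmax g=\{g\ge\max g\}$ is closed, and an increasing net converges to its supremum in the interval topology, so the supremum of any chain in $\argmax g$ again lies in $\argmax g$; in the upward-upper-semicontinuous case, for a chain $C\subset\argmax g$ one has $\max g\ge g(\sup C)\ge\limsup_{c\to\sup C}g(c)=\max g$ by assumption \ref{ass:continuous}, so $\sup C\in\argmax g$. Nonemptiness is where I expect the real difficulty. In the interval-topology case it follows from Frink's theorem that the interval topology on a complete lattice is compact, together with upper semicontinuity of $g$. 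In the remaining case I would combine join-superextremality, which lets one amalgamate near-maximizers into comparable elements of comparable value, with upward upper semicontinuity and assumption \ref{ass:b} to pass to the supremum (resp.\ to an infimum, via the lower bound $b$ furnished by \ref{ass:b} whose value dominates the relevant $\liminf$) of a suitable maximizing chain, thereby attaining $\max g$.

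The step I expect to be the main obstacle is precisely this attainment of the best reply in the non-compact, order-theoretic regime of hypothesis \ref{ass:continuous}: reconciling the upward control of upward upper semicontinuity with the downward control of assumption \ref{ass:b}, and using join-superextremality to convert an arbitrary maximizing family into a single chain whose extremal limit realizes $\max g$. Once nonemptiness is secured, all hypotheses of Theorem \ref{thm:myZhou2} hold for $R$, and that theorem delivers that $\Fix(R)$, the set of Nash equilibria, is a nonempty complete lattice.
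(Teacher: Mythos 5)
Your overall strategy is the paper's: apply the parenthetical form of Theorem \ref{thm:myZhou2} to the joint best-reply correspondence $R$, verifying chain-subcompleteness upwards of the values, chain-boundedness below, upper V-ascendingness and lower C-ascendingness. Your comparative-statics step (condition c)) is essentially the paper's argument, up to a harmless reordering: the paper transports the inequality $f_i(y_i',x_{-i}')\ge f_i(y_i\vee y_i',x_{-i}')$ down to the parameter $x_{-i}$ by single crossing and then applies partial quasi-supermodularity of $f_i(\cdot,x_{-i})$, whereas you apply partial quasi-supermodularity at the upper parameter first and then use single crossing; both work. The genuine gap is in hypothesis \ref{it:Fischainbddabove}: you assert that chain-boundedness below of $R(x)$ is ``automatic from completeness of $S$.'' It is not. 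By Definition \ref{df:chaincompleteZhou}, ``$R(x)$ is chain-bounded below'' means every nonempty chain in $R(x)$ has a lower bound \emph{belonging to $R(x)$}, not merely one in the ambient lattice $S$; were a bound in $S$ enough, the condition would be vacuous for every correspondence on a complete lattice and would not appear as a hypothesis of Theorem \ref{thm:myZhou2}. This is exactly where Assumption \ref{ass:b} enters the paper's proof: for a chain $C$ in $R_i(x)=\{y_i\in S_i: f_i(y_i,x_{-i})\ge a\}$ with $a=\max_{S_i}f_i(\cdot,x_{-i})$, the element $b$ of Assumption \ref{ass:b} satisfies $b\le C$ and $f_i(b,x_{-i})\ge\liminf_{c\in C,\,c\to\inf_{S_i}C}f_i(c,x_{-i})\ge a$, hence $b\in R_i(x)$. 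You never invoke Assumption \ref{ass:b} for this purpose, so condition b) of Theorem \ref{thm:myZhou2} remains unverified in your write-up.

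A secondary shortfall: nonemptiness of $R_i(x)$ in the join-superextremal, upward upper semicontinuous case is only sketched. The paper gets it by first using the chain-boundedness below just described to produce, via Zorn's lemma, a minimal element of the superlevel set, and then citing an external result; your proposal to ``amalgamate near-maximizers'' is not yet an argument, and it too would rest on the lower bounds from Assumption \ref{ass:b} that your verification omits. The interval-topology case (compactness by Frink's theorem plus upper semicontinuity) and the chain-subcompleteness upwards of the values are handled correctly and essentially as in the paper.
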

\begin{proof}
  The sets $S$ and $S_{-i}$ are complete lattices.  We prove that for every $i\in N$, every $x\in S$ and every $a\in \R$, the set $L_i^{a}:=\{y_i\in S_i: f_i(y_i,x_{-i}) \ge a\}$ is chain-subcomplete upwards in $S_i$. Indeed, by \cite[Prop. 3.6]{yu2023topkis2},  Assumption \ref{ass:continuous} implies that $f_i(\cdot,x_{-i})$ is upward upper semicontinuous. So for every nonempty chain $C$ in $L_i^{a}$, one has  \[f_i(\sup_{S_i} C, x_{-i}) \ge \limsup_{y_i\in C,y_i\to \sup_{S_i}C} f_i(y_i,x_{-i}) \ge a.\] Then we get $\sup_{S_i} C \in L_i^{a}$. Therefore, $L_i^{a}$ is chain-subcomplete upwards in $S_i$. Take $b$ as in Assumption \ref{ass:b}. Then \[f_i(b,x_{-i}) \ge \liminf_{c\in C,c\to \inf_S C} f_i(c,x_{-i}) \ge a.\] 
  The lower bound $b$ on $C$ lies in $L_i^{a}$.   Therefore, $R(x)= \prod_{i\in N} L_i^{\max_{S_i} f_i(\cdot,x_{-i})}$ is chain-subcomplete upwards in $S$ and chain bounded below.

We show that $R_i(x)$ is nonempty in both cases of Assumption \ref{ass:continuous}. \begin{itemize}
    \item Assume that  $f_i(\cdot, x_{-i})$ is  join superextremal and upward upper semi-continuous. By Zorn's lemma, $L_i^a$ has a minimal element. Then from \cite[Cor. 3.4(3)]{yu2023topkis3normal},  $R_i(x)$ is nonempty.
    \item Assume that $f_i(\cdot, x_{-i})$  is upper semi-continuous with respect to the interval topology. By \cite[Theorem 20, p.250]{birkhoff1940lattice}, the interval topology of $S_i$ is compact. Then from \cite[Theorem 3, p.361]{bourbaki2013general}, $R_i(x)$ is nonempty.
\end{itemize}

    We check that $R$ is lower C-ascending. For any $x\le x'$ in $S$, every $i\in N$, every $y_i\in R_i(x)$ and every $y'_i \in R_i(x')$, one has \[f_i(y'_i,x'_{-i})\ge f_i(y_i\vee y'_i,x'_{-i}).\] By single crossing property, one has \[f_i(y_i',x_{-i}) \ge f_i(y_i\vee y_i',x_{-i}).\] Then by partial quasi-supermodularity of $f_i(\cdot,x_{-i})$, there is $u_i \in S_i$ such that $u_i \le y_i\wedge y_i'$ and $f_i(u_i,x_{-i}) \ge f_i(y_i,x_{-i})$. As a consequence, one has $u_i \in R_i(x)$. Set $u:=(u_i)_{i\in N}$. Then $u \in R(x)$ and $u\le y \wedge y'$.

    We check that $R$ is upper V-ascending. For any $x \le x'$ in $S$, every $i\in N$, every $y_i\in R_i(x)$ and every $y'_i \in R_i(x')$, every $u_i \in S_i$ with $u_i \le y_i \wedge y_i'$, one has \[f_i(y_i,x_{-i}) \ge f_i(u_i,x_{-i}).\] Therefore, by partial qulasi-supermodularity of $f_i(\cdot,x_{-i})$, one has \[f_i(y_i',x_{-i}) \le f_i(y_i \vee y_i',x_{-i}).\]  By single crossing property, one has $f_i(y_i',x'_{-i}) \le f_i(y_i\vee y_i',x'_{-i})$, which implies $y_i\vee y_i' \in R_i(x')$. Then $y\vee y' \in R(x')$. 

    By Theorem \ref{thm:myZhou2}, the poset $\Fix (R)$ is a nonempty complete lattice. 

\end{proof}

We give a specific example that satisfies the conditions of Theorem \ref{thm:game}. 
\begin{eg}\label{eg:game}
    We define a two-player normal form game as follows. Set \[S_1=\{(0,0),(1,1),(1,2),(2,1),(2,2)\},\] which is a subcomplete sublattice of $\R^2$ and $S_2=\{0\}$. The payoff function $f_i: S \to \R$ ($i=1,2$) is defined by $f_2\equiv 0$ and 
    \[f_1(x)= \begin{cases}
        10& \text{if } x_1=(0,0),\\ 0& \text{if } x_1=(1,1),\\1& \text{else}.
    \end{cases}\]
    The function $f_1(\cdot, 0): S_1 \to \R$ is partially quasi-supermodular and join-superextremal but not quasi-supermodular. Moverover, $f_1$ has the single crossing property. Therefore, this game is a partially quasi-supermodular game.  Since every chain in $S_i$ is finite for $i=1,2$, the assumptions in Theorem \ref{thm:game} are satisfied. The unique Nash equilibrium is $((0,0),0)$.
\end{eg}

	\subsection*{Acknowledgments}I thank my supervisor, Professor Philippe Bich, for his constant support. I am grateful to Professor Amir Rabah and Professor Łukasz Woźny for their valuable suggestions and encouragement. \printbibliography
	
\end{document}